\theoremstyle{definition}
\newtheorem{defin}{Definition}[section]
\newtheorem{thm}[defin]{Theorem}
\newtheorem{prop}[defin]{Proposition}
\newtheorem{rem}[defin]{Remark}
\newtheorem{cor}[defin]{Corollary}
\newtheorem{Lemma}[defin]{Lemma}
\def\hil{{\mathcal H}}
\def\kil{{\mathcal K}}
\def\B{{\mathcal B}}
\def\C{{\mathcal C}}
\def\D{{\mathcal D}}
\def\I{{\mathcal I}}
\def\N{{\mathcal N}}
\def\P{{\mathcal P}}
\def\S{{\mathcal S}}
\def\W{{\mathcal W}}
\def\X{{\mathcal X}}
\def\bz{\left(}
\def\jz{\right)}
\def\ki{\textit}
\def\kii{\textit}
\def\kiii{}
\def\vfi{\varphi}
\def\half{\frac{1}{2}}
\def\ep{\varepsilon}
\def\bN{\mathbb{N}}
\def\bR{\mathbb{R}}
\def\bC{\mathbb{C}}
\def\prod{\mathrm{prod}}
\def\old{}
\def\oldd{\{\s\}}
\def\nw{^{*}}
\def\nww{\ast}
\def\x{^{(t)}}
\def\xx{(t)}
\def\typ{(t)}
\def\inv{^{-1}}
\def\what{\mathbb}
\def\map{\Phi}
\def\notimes{^{(\otimes n)}}
\def\derleft{\partial^{-}}
\def\ol{\overline}
\def\sa{_\mathrm{sa}}
\newcommand{\norm}[1]{\left\| #1\right\|}
\newcommand{\rsr}[3]{D_{#3}\bz #1\|#2\jz}
\newcommand{\rsro}[3]{D_{#3}\bz #1\|#2\jz}
\newcommand{\rsrn}[3]{D^{*}_{#3}\bz #1\|#2\jz}
\newcommand{\bigrsrn}[3]{D\nw_{#3}\bz #1\Big\|#2\jz}
\newcommand{\s}{\mbox{ }}
\newcommand{\ds}{\mbox{ }\mbox{ }}
\newcommand{\diad}[2]{|#1\rangle\langle #2|}
\newcommand{\pr}[1]{\diad{#1}{#1}}
\newcommand{\vecc}[1]{\underline{#1}}
\newcommand{\hol}[1]{\chi_{_{#1}}}
\DeclareMathOperator{\Tr}{Tr}
\DeclareMathOperator{\ran}{ran}
\DeclareMathOperator{\supp}{supp}
\DeclareMathOperator{\id}{id}
\DeclareMathOperator{\logn}{\widehat\log}
\begin{document}

\title{Coding theorems for compound problems via quantum R\'enyi divergences}
\author{Mil\'an Mosonyi \thanks{M.~Mosonyi is with the F\'{\i}sica Te\`{o}rica: Informaci\'{o} i Fenomens Qu\`{a}ntics,
Universitat Aut\`{o}noma de Barcelona, ES-08193 Bellaterra (Barcelona), Spain, and with the Mathematical Institute, Budapest University of Technology and Economics,
Egry J\'ozsef u~1., Budapest, 1111 Hungary (e-mail: milan.mosonyi@gmail.com)}\\
\thanks{This paper was presented in part at the Conference on the Theory of Quantum Computation, Communication \& Complexity (TQC 2014), 21--23 May 2014, Singapore.}}

\maketitle

\begin{abstract}
Recently, a new notion of quantum R\'enyi divergences has been introduced by 
M\"uller-Lennert, Dupuis, Szehr, Fehr and Tomamichel, 
J.~Math.~Phys. \textbf{54}:122203, (2013), and Wilde, Winter, Yang, 
Commun.~Math.~Phys. \textbf{331}:593--622, (2014),
that has found a number of applications in strong converse theorems. 
Here we show that these new R\'enyi divergences are also 
useful tools to obtain coding theorems in the direct domain of various problems. 
We demonstrate this by giving new and considerably simplified proofs for the achievability
parts of Stein's lemma with composite null hypothesis, universal state compression, 
and the classical capacity of compound classical-quantum channels, based on
single-shot error bounds already available in the literature, and simple properties
of the quantum R\'enyi divergences. The novelty of our proofs is that 
the composite/compound coding theorems can be almost directly obtained from the single-shot error bounds,
with essentially the same effort as for the case of
simple null-hypothesis/single source/single channel.
\end{abstract}

\section{Introduction}

R\'enyi introduced a generalization 
of the Kullback-Leibler divergence (relative entropy) in \cite{Renyi}. According to his 
definition, the $\alpha$-divergence of two probability distributions 
$p$ and $q$ on a finite set $\X$ for a parameter $\alpha\in[0,+\infty)\setminus\{1\}$ is 
given by
\begin{align}\label{Renyi def}
\rsr{p}{q}{\alpha}:=
\frac{1}{\alpha-1}\log\sum_{x\in\X}p(x)^{\alpha}q(x)^{1-\alpha}.
\end{align}
The limit $\alpha\to 1$ yields the standard relative entropy.
These quantities turned out to play a central role in information theory and statistics; 
indeed, the R\'enyi divergences quantify the trade-off between the exponents of the relevant 
quantities in many information-theoretic tasks, including hypothesis testing, source coding 
and noisy channel coding; see, e.g.~\cite{Csiszar} for an overview of these results. It was 
also shown in \cite{Csiszar} that the R\'enyi divergences, and other related 
quantities, like the R\'enyi entropies and the R\'enyi capacities, have direct operational 
interpretations as so-called generalized cutoff rates in the corresponding 
information-theoretic tasks.

In quantum theory, the state of a system is described by a density operator instead of a 
probability distribution, and the definition \eqref{Renyi def} can be extended for pairs of 
density operators in various inequivalent ways, due to 
the non-commutativity of operators.
The traditional way to define the R\'enyi divergence of two density operators is
\begin{equation}\label{old Renyi}
\rsro{\rho}{\sigma}{\alpha}:=
\frac{1}{\alpha-1}\log\Tr\rho^{\alpha}\sigma^{1-\alpha}.
\end{equation}
The quantum Hoeffding bound theorem \cite{ANSzV,Hayashi,HMO2,Nagaoka} shows that these divergences, with $\alpha\in(0,1)$,
play the same role in quantifying the trade-off of the two error probabilities in the direct domain of binary state disrcimination as their classical counterparts \eqref{Renyi def} in classical hypothesis testing.
Based on the Hoeffding bound theorem, a direct operational interpretation of these divergences has been given in \cite{MH}.

Recently, a new quantum extension of the R\'enyi $\alpha$-divergences has been proposed in \cite{Renyi_new,WWY}, defined as
\begin{equation}\label{new Renyi def}
\rsrn{\rho}{\sigma}{\alpha}:=
\frac{1}{\alpha-1}\log\Tr\bz\sigma^{\frac{1-\alpha}{2\alpha}}\rho\sigma^{\frac{1-\alpha}{2\alpha}}\jz^{\alpha}.
\end{equation}
This definition was introduced in \cite{Renyi_new} as a parametric family that connects
the min- and max-relative entropies \cite{Datta,RennerPhD} and Umegaki's relative entropy
\cite{Umegaki}. In \cite{WWY}, the corresponding generalized Holevo capacities were used 
to establish the strong converse property for the classical capacities of 
entanglement-breaking and Hadamard channels.
It was shown in \cite{MO} that these new R\'enyi divergences play the same role in the 
(strong) converse problem of binary state discrimination as the traditional R\'enyi divergences in the direct problem. In particular, the strong converse exponent was expressed as a function of the new R\'enyi divergences, and from that a direct operational interpretation was derived for $D_{\alpha}\nw$, $\alpha>1$, as generalized cutoff rates in the sense of 
\cite{Csiszar}. 
Exact strong converse exponents in terms of quantities derived from $D_{\alpha}^*$ have since been obtained for
other types of discrimination problems \cite{CMW,HT,MO2},
as well as for classical-quantum channel coding \cite{MO3}

So far, it seems that the new quantum R\'enyi divergences $D_{\alpha}^*$ find their application in
strong converse theorems, and for the parameter range $\alpha>1$, while the natural quantities for 
the direct part of coding theorems are the traditional $D_{\alpha}$ quantities, with parameters 
$\alpha\in(0,1)$. Our aim here is to show that the new R\'enyi divergences, and with parameters $\alpha\in(0,1)$, are also useful 
to obtain the direct parts of various coding theorems. We demonstrate this by giving new proofs for the achievability parts
of the quantum Stein's lemma with composite null hypothesis \cite{BDKSSSz,Notzel}, universal state compression \cite{JHHH}, and the classical capacity of compound classical-quantum channels \cite{BB,DD}.
We will follow the following unified approach to these coding theorems:
\begin{enumerate}
\item
We start with a single-shot coding theorem that bounds the relevant error probability in terms of a R\'enyi divergence. 
In the case of Stein's Lemma and source compression, this will be Audenaert's inequality \cite{Aud}, while in the case of 
channel coding, we use the random coding theorem due to Hayashi and Nagaoka \cite{HN}. The bounds in both cases are 
in terms of $Q_{\alpha}\old=\exp((\alpha-1)D_{\alpha}\old)$; for instance, in the case of state discrimination, the divergence term of the bound is of the form
$Q_{\alpha}\old(\sum_{\rho}\rho\|\sigma)$, where the summation is over the elements of the composite null-hypothesis set, and $\sigma$ is the alternative hypothesis.

\item
We use the Araki-Lieb-Thirring inequality to further upper bound the $Q_{\alpha}$ term by 
$Q_{\alpha}\nw=\exp\bz(\alpha-1)D_{\alpha}\nw\jz$.
The purpose of this is to benefit from a simple subadditivity property of  $Q_{\alpha}^*$, that allows to decouple
the upper bound into a sum of pairwise terms, e.g., $Q_{\alpha}\nw(\sum_{\rho}\rho\|\sigma)$ 
into $\sum_{\rho}Q_{\alpha}\nw(\rho\|\sigma)$
in the above example.

\item
We may also use a converse to the Araki-Lieb-Thirring inequality, due do Audenaert \cite{Aud-ALT}, to convert the
$D_{\alpha}\nw$ divergences back to $D_{\alpha}$, if that offers a simplification of the proof.

\item
Finally, we apply the above bounds to many copies, and take the number of copies to infinity.
\end{enumerate}

The advantage of the above approach is that it only uses very 
general arguments that are largely independent of the concrete model in consideration.
Once the single-shot coding theorems are available, the coding theorems for the composite/compound cases follow 
essentially by the same 
amount of effort as for the simple cases (simple null-hypothesis, single source, single channel), using only very general 
properties of the R\'enyi divergences. This makes the proofs considerably shorter and simpler than e.g., in 
\cite{BDKSSSz,BB,DD}. Moreover, this approach is very easy to generalize to non-i.i.d.~compound problems, 
as it does not rely on the method of types, cf.~\cite{JHHH,Notzel}.

We would also like to emphasize the technical simplicity of the proofs; the only technically more involved ingredients are
the Araki-Lieb-Thirring inequality \cite{Araki,LT} and its converse \cite{Aud-ALT}, and the Hayashi-Nagaoka random coding 
lemma \cite{HN}.
\smallskip

The structure of the paper is as follows. In Section \ref{sec:notations} we collect the necessary preliminaries.
In Section \ref{sec:Renyi}, we review some properties of the R\'enyi divergences and the related notion of $\alpha$-capacities. 
The new contribution towards the study of R\'enyi divergences
are the lower bounds in Lemma \ref{Lemma:old-new} and Proposition \ref{prop:complements}, both of which we will utilize 
in the coding theorems in Section \ref{sec:applications}, together with other technical lemmas,
Lemma \ref{Lemma:dist limit} and Lemma \ref{Lemma:inf Hol limit}. 
Since the new type of R\'enyi divergences have been introduced very recently, and their properties and applications
are at the moment being intensively explored in the literature, we also include some observations
in Section \ref{sec:Renyi} that are not directly necessary for Section \ref{sec:applications}.
This is partly to put other things into a broader context (e.g., connecting Proposition \ref{prop:complements}
to the very important convexity properties of the R\'enyi quantities in Section \ref{sec:convexity}),
and partly in the hope of possible future applications (e.g., for Remark \ref{rem:quant} and 
Lemma \ref{Lemma:chi bounds}).

The main contribution of the paper is Section \ref{sec:applications}, where we prove the achievability parts of Stein's lemma with composite null-hypothesis in Section \ref{sec:Stein}, for universal state compression in Section \ref{sec:comp},
and for classical-quantum channel coding in Section \ref{sec:capacity}, following the approach outlined above.

\section{Preliminaries}
\label{sec:notations}

For a finite-dimensional Hilbert space $\hil$, let $\B(\hil)_+$ denote the set of all 
non-zero positive semidefinite operators on $\hil$, and let 
$\S(\hil):=\{\rho\in\B(\hil)_+:\,\Tr\rho=1\}$ be the set of all \ki{density operators (states)}
on $\hil$. We use the notation $\B(\hil)\sa$ for the set of self-adjoint operators on $\hil$.

We define the powers of a positive semidefinite operator $A$ only on its support; that is, 
if $\lambda_1,\ldots,\lambda_r$ are the strictly positive eigenvalues of $A$, with corresponding 
spectral projections $P_1,\ldots,P_r$, then we define
$A^{\alpha}:=\sum_{i=1}^r \lambda_i^{\alpha}P_i$ for all $\alpha\in\bR$. In particular, 
$A^0=\sum_{i=1}^rP_i$ is the projection onto the support of $A$. 

For a self-adjoint operator $X$, we will use the notation $\{X>0\}$ to denote the spectral projection of $X$ corresponding 
to the positive half-line $(0,+\infty)$. The spectral projections $\{X\ge 0\},\,\{X<0\}$ and $\{X\le 0\}$ are defined similarly.
The positive part $X_+$ and the negative part $X_-$ are defined as $X_+:=X\{X>0\}$ and $X_-:=-X\{X<0\}$, respectively, and the absolute value of $X$ is $|X|:=X_++X_-$. The \ki{trace-norm} of $X$ is $\norm{X}_1:=\Tr|X|$.

The following Lemma is Theorem 1 from \cite{Aud}; see also Proposition 1.1 in \cite{JOPS} for a simplified proof.
\begin{Lemma}\label{Lemma:Aud}
Let $A,B$ be positive semidefinite operators on a Hilbert space. For any $t\in[0,1]$,
\begin{align*}
&\Tr A(I-\{A-B>0\})+\Tr B\{A-B>0\}\\
&\ds=\half\Tr(A+B)-\half\norm{A-B}_1\\
&\ds\le\Tr A^tB^{1-t}.
\end{align*}
\end{Lemma}
\medskip

The closeness of two operators can be measured in various ways. Apart from the trace-norm, we will also use the 
\ki{operator norm}, defined for an operator $A\in\B(\hil)$ as $\norm{A}:=\max\{\norm{Ax}:\,x\in\hil,\,\norm{x}\le 1\}$.
The \ki{fidelity} of positive semidefinite operators $A$ and $B$ is defined as 
$F(A,B):=\Tr\bz A^{1/2}BA^{1/2}\jz^{1/2}$.

The \ki{entanglement fidelity} of a state $\rho$ and a completely positive trace-preserving (CPTP) map $\map$ is
$F_e(\rho,\map):=F\bz\pr{\psi_{\rho}},(\id\otimes \map)\pr{\psi_{\rho}}\jz$, where $\psi_{\rho}$ is any purification of the state $\rho$; see Chapter 9 in \cite{NC}
for details.
\smallskip

The next Lemma is a reformulation of Lemma 2.6 in \cite{MS}. We include the proof for readers' convenience.
\begin{Lemma}\label{Lemma:state approximation}
Let $(V,\norm{.})$ be a finite-dimensional real or complex  normed vector space, and let $\dim_{\bR}V$ denote its real dimension.
Let $\N$ be a subset of the unit ball of $V$. For every $\delta>0$, there exists a finite subset $\N_{\delta}\subset\N$ such that 
\smallskip

\noindent 1. $\displaystyle{|\N_{\delta}|\le (1+2/\delta)^{\dim_{\bR}V}}$, 
and
\smallskip

\noindent 2. for every $v\in\N$ there exists a $v_{\delta}\in\N_{\delta}$ such that
$\norm{v-v_{\delta}}<\delta$.
\end{Lemma}
\begin{proof}
For every $\delta>0$, let $\N_{\delta}$ be a maximal set in $\N$ such that $\norm{v-v'}\ge \delta$ for every
$v,v'\in\N_{\delta}$; then $\N_{\delta}$ clearly satisfies 2.
On the other hand, the open $\norm{\s}$-balls with radius 
$\delta/2$ around the elements of $\N_{\delta}$ are disjoint, and contained in the $\norm{\s}$-ball with radius $1+\delta/2$
and origin $0$. Since the volume of balls scales with their radius on the power $\dim_{\bR}V$, we obtain 1.
\end{proof}
\medskip

The following minimax theorem is Corollary A.2 in \cite{MH}:
\begin{Lemma}\label{Lemma:minimax}
Let $X$ be a compact topological space, $Y$ be a subset of the real line, and 
$f:\,X\times Y\to\bR\cup\{-\infty,+\infty\}$ be such that for every $y\in Y$, $f(.,y)$ is lower semicontinuous on $X$,
and for every $x\in X$, $f(x,.)$ is monotone increasing on $Y$. Then
\begin{align*}
\inf_{x\in X}\sup_{y\in Y}f(x,y)=\sup_{y\in Y}\inf_{x\in X}f(x,y),
\end{align*}
and the infima can be replaced with minima.
\end{Lemma}

For the natural logarithm function $\log$, we will use the convention 
\begin{align*}
\log 0:=-\infty\ds\ds\text{and}\ds\ds\log +\infty:=+\infty.
\end{align*}
We also introduce the notation
\begin{align}\label{sign}
s(\alpha):=\begin{cases}
1,&\alpha\in[0,1],\\
-1,&\alpha>1.
\end{cases}
\end{align}

\section{R\'enyi divergences}
\label{sec:Renyi}

\subsection{Two definitions}\label{sec:ALT}

For non-zero positive semidefinite operators $\rho,\sigma$, 
and every $\alpha\in(0,+\infty)$, let
\begin{align}\label{Q def}
Q_{\alpha}\old(\rho\|\sigma)&:=\Tr\rho^{\alpha}\sigma^{1-\alpha},\nonumber\\
Q_{\alpha}\nw(\rho\|\sigma)&:=\Tr\bz\sigma^{\frac{1-\alpha}{2\alpha}}\rho\sigma^{\frac{1-\alpha}{2\alpha}}\jz^{\alpha},
\end{align}
and define
\begin{align*}
\psi_{\alpha}\x(\rho\|\sigma):=\log Q_{\alpha}\x(\rho\|\sigma),\ds\ds\ds\typ=\oldd\ds\text{or}\ds\typ=\nww.
\end{align*}
Here and henceforth $\oldd$ stands for the empty string, i.e., $Q_{\alpha}\x$ with $\typ=\oldd$ is simply $Q_{\alpha}$.
For positive definite operators $\rho,\sigma$, 
the \ki{R\'enyi $\alpha$-divergences} \cite{Renyi} of $\rho$ w.r.t.~$\sigma$ 
with parameter $\alpha\in(0,+\infty)\setminus\{1\}$ are defined as
\begin{align}
D_{\alpha}\x(\rho\|\sigma)&:=\frac{1}{\alpha-1}\log Q_{\alpha}\x(\rho\|\sigma)-\frac{1}{\alpha-1}\log\Tr\rho\nonumber\\
&=
\frac{\psi_{\alpha}\x(\rho\|\sigma)-\psi_{1}\x(\rho\|\sigma)}{\alpha-1}.\label{D def inv}
\end{align}
For not necessarily invertible operators the definition is extended by 
\begin{align}\label{def by limit}
D_{\alpha}\x(\rho\|\sigma):=\lim_{\ep\searrow 0}D_{\alpha}\x(\rho+\ep I\|\sigma+\ep I).
\end{align}
It is easy to see that these limits exist, and we get
\begin{align*}
\rsro{\rho}{\sigma}{\alpha}&=
\frac{1}{\alpha-1}\log\Tr\rho^{\alpha}\sigma^{1-\alpha}-\frac{1}{\alpha-1}\log\Tr\rho,\\
\rsrn{\rho}{\sigma}{\alpha}&=
\frac{1}{\alpha-1}\log\Tr\bz\sigma^{\frac{1-\alpha}{2\alpha}}\rho\sigma^{\frac{1-\alpha}{2\alpha}}\jz^{\alpha}
-\frac{1}{\alpha-1}\log\Tr\rho
\end{align*}
when $\alpha\in(0,1)$ or $\supp\rho\subseteq\supp\sigma$, and $D_{\alpha}\x(\rho\|\sigma)=+\infty$ otherwise.

$Q_{\alpha}$ is a so-called \ki{quasi-entropy} or \ki{quantum $f$-divergence}, corresponding to the power function 
$x^{\alpha}$ \cite{HMPB,Petz}; its convexity and monotonicity properties \cite{Ando,Lieb,MH,Petz,HMPB} are of central importance 
for quantum information theory \cite{LR,NC,Petzbook,Wildebook}. The corresponding R\'enyi divergence $D_{\alpha}$ has been used 
in quantum information theory for a long time \cite{Hayashibook,Nagaoka2,ON,ON2} in bounds on the error probability in various 
information-theoretic tasks, and it has been shown recently to have a direct operational interpretation for $\alpha\in(0,1)$
in the problem of the \ki{quantum Hoeffding bound} \cite{Aud,ANSzV,Hayashi,Nagaoka}.
The R\'enyi divergence $D_{\alpha}\nw$ has been introduced recently in \cite{Renyi_new,WWY}, and has found applications in
various strong converse problems since then \cite{CMW,MO,MO2,WWY}.

\begin{rem}
It is easy to see that for non-zero $\rho$, we have $\lim_{\sigma\to 0}\rsro{\rho}{\sigma}{\alpha}=\lim_{\sigma\to 0}\rsrn{\rho}{\sigma}{\alpha}=+\infty$, and hence we define 
$\rsro{\rho}{0}{\alpha}:=\rsrn{\rho}{0}{\alpha}:=+\infty$ when $\rho\ne 0$. On the other hand, for non-zero $\sigma$, the limits
$\lim_{\rho\to 0}\rsro{\rho}{\sigma}{\alpha}$ and $\lim_{\rho\to 0}\rsrn{\rho}{\sigma}{\alpha}$ don't exist, and hence we don't define the values of 
$\rsro{0}{\sigma}{\alpha}$ and $\rsrn{0}{\sigma}{\alpha}$. Indeed, one can consider $\rho_n:=\frac{1}{n}\pr{0}+\frac{1}{n^{\beta}}\pr{1}$, and 
$\sigma:=\pr{1}$, where $\pr{0}$ and $\pr{1}$ are orthogonal rank $1$ projections. It is easy to see that for $\alpha<1$,
$\lim_{n\to +\infty}\rsro{\rho_n}{\sigma}{\alpha}= 
\lim_{n\to +\infty}\rsrn{\rho_n}{\sigma}{\alpha}=
\lim_{n\to +\infty}\frac{1}{\alpha-1}\log\frac{n^{1-\beta\alpha}}{1+n^{1-\beta}}$ depends on the value of $\beta$. A similar example can be used for $\alpha>1$.
\end{rem}

For invertible $\rho$ and $\sigma$, the second derivative of $\alpha\mapsto \psi_{\alpha}(\rho\|\sigma)$
is easily seen to be non-negative, and hence, by \eqref{D def inv}, 
\begin{align}\label{mon}
\alpha\mapsto D_{\alpha}\old(\rho\|\sigma)\ds\ds\text{ is monotone increasing}.
\end{align}
The same holds for general $\rho$ and $\sigma$ due to \eqref{def by limit}.
As a consequence, the \ki{R\'enyi entropies} 
\begin{align*}
S_{\alpha}(\rho)&:=-\rsro{\rho}{I}{\alpha}=-\rsrn{\rho}{I}{\alpha}\\
&=\frac{1}{1-\alpha}\log
\Tr\rho^{\alpha}-\frac{1}{1-\alpha}\log\Tr\rho
\end{align*}
are monotonic decreasing in $\alpha$ for any 
fixed $\rho$, and hence
\begin{align}\label{power bound}
s(\alpha)\Tr\rho^{\alpha}\le s(\alpha)(\Tr\rho^0)^{(1-\alpha)}(\Tr\rho)^{\alpha},\ds\ds\ds\alpha\in(0,+\infty).
\end{align}

It is straightforward to verify that 
$D_{\alpha}\old$ yields Umegaki's \ki{relative entropy} \cite{Umegaki,Wehrl} in the limit 
$\alpha\to 1$; i.e., for any $\rho,\sigma\in\B(\hil)_+$,
\begin{align}\label{1 limit}
\rsr{\rho}{\sigma}{1}&:=\lim_{\alpha\to 1}\rsro{\rho}{\sigma}{\alpha}\nonumber\\
&=
\begin{cases}
\frac{1}{\Tr\rho}\Tr\rho(\logn\rho-\logn\sigma),&\supp\rho\subseteq\supp\sigma,\\
+\infty,&\text{otherwise}.
\end{cases}
\end{align}
In the above formula, $\logn X$ stands for the logarithm of $X\in\B(\hil)_+$ taken on its support, 
and defined to be $0$ on the orthocomplement of its support.
The same limit relation has been shown to hold for $D_{\alpha}^*$ in \cite{Renyi_new}, and in \cite{WWY} for 
$\alpha\searrow 1$, by explicitly computing the derivative of 
$\alpha\mapsto\psi_{\alpha}\nw(\rho\|\sigma)$
at $\alpha=1$. We give an alternative derivation in Corollary \ref{cor:1 limit}.
\smallskip

It has been noted in \cite{WWY} that the Araki-Lieb-Thirring inequality \cite{Araki,LT} yields the ordering
$\rsrn{\rho}{\sigma}{\alpha}\le\rsro{\rho}{\sigma}{\alpha}$. The inequalities in 
\eqref{old-new bounds0}--\eqref{old-new bounds7} below complement this inequality.

\begin{Lemma}\label{Lemma:old-new}
For any $\rho,\sigma\in\B(\hil)_+$, and any $\alpha\in(0,+\infty)$,
\begin{align}
\rsro{\rho}{\sigma}{\alpha}
\ge&
\rsrn{\rho}{\sigma}{\alpha}\nonumber\\
\ge&
\alpha\rsro{\rho}{\sigma}{\alpha}+\log\Tr\rho
-\log\Tr\rho^{\alpha}\nonumber\\
&+(\alpha-1)\log\norm{\sigma}.\label{old-new bounds}
\end{align}
If $\rho$ is a density operator then 
\begin{align}
\label{old-new bounds9}
\rsro{\rho}{\sigma}{\alpha}
\ge&
\rsrn{\rho}{\sigma}{\alpha}\nonumber\\
\ge&
\alpha\rsro{\rho}{\sigma}{\alpha}-|\alpha-1|\max\{0,1-\alpha\}\log\Tr\rho^0\nonumber\\
&+(\alpha-1)\log\norm{\sigma},
\end{align}
and if also $\sigma$ is a density operator then
\begin{align}
\rsro{\rho}{\sigma}{\alpha}
\ge&
\rsrn{\rho}{\sigma}{\alpha}\nonumber\\
\ge&
\alpha\rsro{\rho}{\sigma}{\alpha}-|\alpha-1|\log\max\{\Tr\rho^0,\Tr\sigma^0\}\label{old-new bounds7}\\
\ge&
\alpha\rsro{\rho}{\sigma}{\alpha}-|\alpha-1|\log(\dim\hil).\label{old-new bounds10}
\end{align}
\end{Lemma}
\begin{proof}
According to the Araki-Lieb-Thirring inequality \cite{Araki,LT}, for any positive semidefinite operators 
$A,B$,
\begin{equation}\label{ALT}
s(\alpha)\Tr A^{\alpha}B^{\alpha}A^{\alpha}\le s(\alpha)\Tr (ABA)^{\alpha}.
\end{equation}
A converse to the Araki-Lieb-Thirring inequality was  given in 
\cite{Aud-ALT}, where it was shown that 
\begin{equation}\label{converse ALT1}
s(\alpha)\Tr (ABA)^{\alpha}\le s(\alpha)\bz\norm{B}^{\alpha}\Tr A^{2\alpha}\jz^{1-\alpha}
\bz\Tr A^{\alpha}B^{\alpha}A^{\alpha}\jz^{\alpha}.
\end{equation}
Applying \eqref{ALT} and \eqref{converse ALT1} to $A:=\rho^{\half}$ and 
$B:=\sigma^{\frac{1-\alpha}{\alpha}}$, we get 
\begin{align}
s(\alpha)\Tr\rho^{\alpha}\sigma^{1-\alpha}
&\le
s(\alpha)\Tr\bz\rho^{\half}\sigma^{\frac{1-\alpha}{\alpha}}\rho^{\half}\jz^{\alpha}\nonumber\\
&\le
s(\alpha)\norm{\sigma}^{(1-\alpha)^2}\bz\Tr\rho^{\alpha}\jz^{1-\alpha}\bz\Tr\rho^{\alpha}\sigma^{1-\alpha}\jz^{\alpha}.
\label{old-new bounds0}
\end{align}
This is equivalent to \eqref{old-new bounds} for invertible $\rho$ and $\sigma$, and 
hence \eqref{old-new bounds} holds also for general $\rho$ and $\sigma$ due to 
\eqref{def by limit}.

When $\alpha\in(0,1)$,
plugging \eqref{power bound} into the second inequality in \eqref{old-new bounds0} yields
\begin{align*}
\Tr\bz\rho^{\half}\sigma^{\frac{1-\alpha}{\alpha}}\rho^{\half}\jz^{\alpha}
\le&
\norm{\sigma}^{(1-\alpha)^2}\bz\Tr\rho^{0}\jz^{(1-\alpha)^2}\bz\Tr\rho\jz^{\alpha(1-\alpha)}\nonumber\\
&\cdot\bz\Tr\rho^{\alpha}\sigma^{1-\alpha}\jz^{\alpha},
\end{align*}
and hence
\begin{align*}
\rsrn{\rho}{\sigma}{\alpha}
\ge&
\alpha\rsro{\rho}{\sigma}{\alpha}\\
&+(1-\alpha)\bz\log\Tr\rho-\log\Tr\rho^{0}-\log\norm{\sigma}\jz.
\end{align*}
From this, \eqref{old-new bounds9} and \eqref{old-new bounds7} follow immediately.

When $\alpha>1$, we have $\Tr\bz\rho/\norm{\rho}\jz^{\alpha}\le\Tr\bz\rho/\norm{\rho}\jz$, and plugging it into \eqref{old-new bounds} yields
\begin{align*}
\rsrn{\rho}{\sigma}{\alpha}
\ge
\alpha\rsro{\rho}{\sigma}{\alpha}+(\alpha-1)\bz\log\norm{\sigma}-\log\norm{\rho}\jz,
\end{align*}
and \eqref{old-new bounds9} follows as a special case. 
In particular, if $\norm{\rho}\le 1$ then $\Tr\sigma\le\norm{\sigma}\Tr\sigma^0$ yields
\begin{align*}
\rsrn{\rho}{\sigma}{\alpha}
\ge
\alpha\rsro{\rho}{\sigma}{\alpha}+(\alpha-1)\bz\log\Tr\sigma-\log\Tr\sigma^0\jz,
\end{align*}
which yields \eqref{old-new bounds7}.
\end{proof}

\begin{cor}\label{cor:1 limit}
For any two non-zero positive semidefinite operators $\rho,\sigma$,
\begin{align}\label{1 limit new}
\lim_{\alpha\to 1}\rsrn{\rho}{\sigma}{\alpha}=
\rsr{\rho}{\sigma}{1}.
\end{align}
\end{cor}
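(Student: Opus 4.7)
The plan is to derive the limit directly from the sandwich bounds in Lemma \ref{lemma:old-new bounds} together with the known limiting behavior \eqref{1 limit} of $D_{\alpha}\old$. Rewriting the bounds, for every $\alpha\in[0,+\infty)\setminus\{1\}$ we have
\begin{equation*}
\rsro{\rho}{\sigma}{\alpha}\;\ge\;\rsrn{\rho}{\sigma}{\alpha}\;\ge\;\alpha\,\rsro{\rho}{\sigma}{\alpha}+\log\Tr\rho-\log\Tr\rho^{\alpha}+(\alpha-1)\log\norm{\sigma},
\end{equation*}
so it suffices to show that the upper and lower bounds have a common limit as $\alpha\to 1$ equal to $\rsr{\rho}{\sigma}{1}$, and then invoke the squeeze principle.

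First I would handle the case $\supp\rho\subseteq\supp\sigma$. Here \eqref{1 limit} gives $\lim_{\alpha\to 1}\rsro{\rho}{\sigma}{\alpha}=\rsr{\rho}{\sigma}{1}$, which is finite. Continuity of the functional calculus yields $\Tr\rho^{\alpha}\to\Tr\rho$, hence $\log\Tr\rho-\log\Tr\rho^{\alpha}\to 0$, and obviously $(\alpha-1)\log\norm{\sigma}\to 0$. Therefore the lower bound converges to $\rsr{\rho}{\sigma}{1}$, the upper bound does the same trivially, and the limit \eqref{1 limit new} follows.

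It remains to handle $\supp\rho\nsubseteq\supp\sigma$, where $\rsr{\rho}{\sigma}{1}=+\infty$. For $\alpha>1$ both $\rsro{\rho}{\sigma}{\alpha}$ and $\rsrn{\rho}{\sigma}{\alpha}$ are $+\infty$ by definition, so the one-sided limit from above is trivially $+\infty$. For $\alpha\in(0,1)$ both quantities are finite, but now \eqref{1 limit} tells us $\lim_{\alpha\to 1^-}\rsro{\rho}{\sigma}{\alpha}=+\infty$; since $\log\Tr\rho-\log\Tr\rho^{\alpha}+(\alpha-1)\log\norm{\sigma}$ stays bounded as $\alpha\to 1^-$, the lower bound of the sandwich tends to $+\infty$, forcing $\rsrn{\rho}{\sigma}{\alpha}\to+\infty$ as well.

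There is no real obstacle: the only thing to watch is the case split by the support condition and the mild subtlety that one must verify the additive ``correction'' terms in the lower bound of Lemma \ref{lemma:old-new bounds} vanish (rather than merely remain bounded) in the limit, so that the squeeze is tight. Both points are handled by elementary continuity considerations, and no further ingredient beyond Lemma \ref{lemma:old-new bounds} and \eqref{1 limit} is needed.
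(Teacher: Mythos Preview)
Your proposal is correct and is precisely the argument the paper has in mind: the corollary is stated as an immediate consequence of \eqref{1 limit} together with lemma \ref{lemma:old-new bounds}, and you have simply written out the sandwich argument that the paper leaves implicit, including the necessary case split on the support condition.
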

\begin{proof}
Immediate from \eqref{old-new bounds} and \eqref{1 limit}.
\end{proof}

\begin{rem}
According to the results of \cite{Hiai-ALT}, the first inequality in \eqref{old-new bounds} holds as an equality if and only if
$\alpha=1$ or $\rho$ and $\sigma$ commute with each other.
\end{rem}

\begin{rem}\label{rem:quant}
A quantitative version of \eqref{1 limit} was given in \cite[Lemma 6.3]{Tomamichel} for $\alpha\searrow 1$, and the same argument yields analogous bounds for $\alpha\nearrow 1$, as noted in \cite[Lemma 2.3]{AMV}. A quantitative version of 
\eqref{1 limit new} can be obtained by combinig the bound in \cite[Lemma 2.3]{AMV} with the inequalities of 
Lemma \ref{Lemma:old-new}, which yields
\begin{align*}
\rsr{\rho}{\sigma}{1}\ge&\rsrn{\rho}{\sigma}{\alpha}\\
\ge&
\alpha\rsr{\rho}{\sigma}{1}-4\alpha(1-\alpha)(\log\eta)^2\cosh c\\
&+\log\Tr\rho
-\log\Tr\rho^{\alpha}+(1-\alpha)\log\norm{\sigma}\inv,
\end{align*}
when $1-\delta<\alpha<1$, and 
\begin{align*}
\rsr{\rho}{\sigma}{1}\le\rsrn{\rho}{\sigma}{\alpha}
\le&
\rsr{\rho}{\sigma}{1}-4(1-\alpha)(\log\eta)^2\cosh c,
\end{align*}
when $1<\alpha<1+\delta$,
where $\eta:=1+\Tr\rho^{3/2}\sigma^{-1/2}+\Tr\rho^{1/2}\sigma^{1/2}$, $c$ is an arbitrary positive number, and  
$\delta:=\min\left\{\half, \frac{c}{2\log\eta}\right\}$. 
The second set of inequalities has already been noted in \cite{WWY}. In particular, if $\rho$ and $\sigma$ are states
then using \eqref{old-new bounds7} instead of \eqref{old-new bounds} in the first set of inequalities above, we get
\begin{align*}
\rsr{\rho}{\sigma}{1}
\ge&
\rsrn{\rho}{\sigma}{\alpha}\\
\ge& 
\alpha\rsr{\rho}{\sigma}{1}\\
&-(1-\alpha)\left[ 4\alpha(\log\eta)^2\cosh c+\log(\dim\hil)\right]
\end{align*}
for every $1-\delta<\alpha<1$.
\end{rem}
\smallskip

We will also need the following generalization of \eqref{1 limit} and \eqref{1 limit new}:
\begin{Lemma}\label{Lemma:dist limit}
Let $\N\subseteq\S(\hil)$ and $\sigma\in\B(\hil)_+$ be such that $\supp\rho\subseteq\supp\sigma$ for all $\rho\in\N$. 
For both $\xx=\oldd$ and $\xx=\nww$,
\begin{align}\label{dist limit}
\lim_{\alpha\to 1}\inf_{\rho\in\N}D_{\alpha}\x(\rho\|\sigma)=\inf_{\rho\in\N}D_1(\rho\|\sigma).
\end{align}
\end{Lemma}
\begin{proof}
By \eqref{mon} and \eqref{1 limit}, we have 
\begin{align*}
\lim_{\alpha\searrow 1}\inf_{\rho\in\N}D_{\alpha}(\rho\|\sigma)&=
\inf_{\alpha>1}\inf_{\rho\in\N}D_{\alpha}(\rho\|\sigma)\\
&=
\inf_{\rho\in\N}\inf_{\alpha>1}D_{\alpha}(\rho\|\sigma)\\
&=
\inf_{\rho\in\N}D_1(\rho\|\sigma).
\end{align*}
Thanks to the support assumption, $\rho\mapsto D_{\alpha}(\rho\|\sigma)$ is continuous on $\N$ for every $\alpha\in(0,+\infty)$, and hence it is also continuous on the closure
(w.r.t.~any norm) $\ol\N$ of $\N$, and 
$\inf_{\rho\in\N}D_{\alpha}(\rho\|\sigma)=\min_{\rho\in\ol\N}D_{\alpha}(\rho\|\sigma)$.
Using again the monotonicity \eqref{mon}, Lemma \ref{Lemma:minimax} and \eqref{1 limit}, we have
\begin{align*}
\lim_{\alpha\nearrow 1}\inf_{\rho\in\N}D_{\alpha}(\rho\|\sigma)&=
\sup_{\alpha\in(0,1)}\min_{\rho\in\ol\N}D_{\alpha}(\rho\|\sigma)\\
&=
\min_{\rho\in\ol\N}\sup_{\alpha\in(0,1)}D_{\alpha}(\rho\|\sigma)\\
&=
\min_{\rho\in\ol\N}D_1(\rho\|\sigma)\\
&=
\inf_{\rho\in\N}D_1(\rho\|\sigma).
\end{align*}
This proves the assertion for $\xx=\oldd$. Using now \eqref{old-new bounds9}, we have 
\begin{align*}
\inf_{\rho\in\N}D_{\alpha}\old(\rho\|\sigma)
\ge& 
\inf_{\rho\in\N}D_{\alpha}\nw(\rho\|\sigma)\\
\ge&
\alpha\inf_{\rho\in\N}\rsro{\rho}{\sigma}{\alpha}
-|\alpha-1|\log\dim\hil\\
&+(\alpha-1)\log\norm{\sigma}.
\end{align*}
Combining it with \eqref{dist limit} for $\xx=\oldd$ yields \eqref{dist limit} for $\xx=\nww$.
\end{proof}

\subsection{Convexity properties}
\label{sec:convexity}

Probably the most important mathematical property of the R\'enyi divergences is their monotonicity under CPTP maps
for certain ranges of the parameter $\alpha$. This is known to be equivalent to the joint concavity of $s(\alpha)Q_{\alpha}\x$,
in the sense that they can be easily derived from each other. The latter can be formulated as follows:
If $\rho_i,\sigma_i\in\B(\hil)_+,\,i=1,\ldots,r$, and $\gamma_1,\ldots,\gamma_r$ is a probability distribution
on $[r]:=\{1,\ldots,r\}$, then
\begin{align}\label{joint concavity}
s(\alpha)Q\x_{\alpha}\bz\sum_i\gamma_i\rho_i\Big\|\sum_i\gamma_i\sigma_i\jz&\ge s(\alpha)\sum_i\gamma_iQ\x_{\alpha}(\rho_i\|\sigma_i)
\end{align}
for $\typ=\oldd$ and $\alpha\in[0,2]$ and for $\typ=\nww$ and $\alpha\in[1/2,+\infty)$
(for $\alpha>1$ one also has to assume that $\supp\rho_i\subseteq\supp\sigma_i$ for all $i$.)
This has been proved for $\typ=\oldd$ and $\alpha\in(0,1)$ in \cite{Lieb}, and 
for $\typ=\oldd$ and $\alpha\in(1,2]$ in \cite{Ando}; see also \cite{HMPB,Petz} for a different proof of both.
The case $\typ=\nww$ and $\alpha\in[1/2,1]$ follows from 
the general concavity result in \cite[Theorem 2.1]{Hiai}, and the case 
$\typ=\nww$ and $\alpha\in[1,2]$ was proved in \cite{Renyi_new,WWY}. Finally, the case 
$\typ=\nww$ was proved by a different method in \cite{FL} for all $\alpha\in[1/2,+\infty)$.
It is known that for $\typ=\oldd$ and $\alpha>2$, and for $\typ=\nww$ and $\alpha\in(0,1/2)$, \eqref{joint concavity}
need not hold in general \cite{Renyi_new}.

Our goal here is to complement \eqref{joint concavity} to some extent.
The following Lemma is a special case of the famous Rotfel'd inequality
(see, e.g., Section 4.5 in \cite{Hiaibook}). 
For the coding theorems in Sections \ref{sec:Stein}--\ref{sec:capacity}, we only need the inequality 
\eqref{subadditivity inequality1} below for $\alpha\in(0,1)$. For readers' convenience,
we include an elementary proof below that covers this range of $\alpha$.

\begin{Lemma}\label{Lemma:subadditivity1}
The function $A\mapsto s(\alpha)\Tr A^{\alpha}$ is subadditive on positive semidefinite operators for every 
$\alpha\in[0,+\infty)$. That is, if $A,B\in\B(\hil)_+$ then 
\begin{align}
s(\alpha)\Tr(A+B)^{\alpha}&\le s(\alpha)\bz\Tr A^{\alpha}+\Tr B^{\alpha}\jz,\ds\ds\alpha\in[0,+\infty).
\label{subadditivity inequality1}
\end{align}
\end{Lemma}
\begin{proof}
We only prove the case $\alpha\in[0,2]$.
Assume first that $A$ and $B$ are invertible and let $\alpha\in(0,1)$. Then
\begin{align*}
\Tr(A+B)^{\alpha}-\Tr A^{\alpha}&=
\int_{0}^1\frac{d}{dt}\Tr(A+tB)^{\alpha}\,dt\\
&=
\int_{0}^1\alpha\Tr B(A+tB)^{\alpha-1}\,dt\\
&\le
\int_{0}^1\alpha\Tr B(tB)^{\alpha-1}\,dt\\
&=\Tr B^{\alpha}\int_{0}^1\alpha t^{\alpha-1}\,dt\\
&=\Tr B^{\alpha},
\end{align*}
where in the first line we used
the identity $(d/dt)\Tr f(A+tB)=\Tr Bf'(A+tB)$, and the inequality follows
from the fact that $x\mapsto x^{\alpha-1}$ is operator monotone decreasing on $(0,+\infty)$ for 
$\alpha\in(0,1)$. 
This proves \eqref{subadditivity inequality1} for invertible $A$ and $B$, and
the general case follows by continuity. 
The proof for the case $\alpha\in(1,2]$ goes the same way, using the fact that 
$x\mapsto x^{\alpha-1}$ is operator monotone increasing on $(0,+\infty)$ for 
$\alpha\in(1,2]$. 
The case $\alpha=1$ is trivial, and the case $\alpha=0$ follows by taking the limit $\alpha\to 0$ in \eqref{subadditivity inequality1}.
\end{proof}

\begin{prop}\label{prop:complements}
Let $\sigma,\rho_1,\ldots,\rho_r\in\B(\hil)_+$, and $\gamma_1,\ldots,\gamma_r$ be a probability distribution on $[r]$. 
For every $\alpha\in[0,+\infty)$,
\begin{align}
s(\alpha)\sum_i\gamma_iQ\nw_{\alpha}(\rho_i\|\sigma)
&\le 
s(\alpha)Q\nw_{\alpha}\bz\sum_i\gamma_i\rho_i\Big\|\sigma\jz\nonumber\\
&\le 
s(\alpha)\sum_i\gamma_i^{\alpha}Q\nw_{\alpha}(\rho_i\|\sigma),\label{concavity complement}
\end{align}
and
\begin{align}
\max_i\rsrn{\rho_i}{\sigma}{\alpha}
&\ge
\bigrsrn{\sum_{i=1}^r\gamma_i\rho_i}{\sigma}{\alpha}\nonumber\\
&\ge
\min_i\rsrn{\rho_i}{\sigma}{\alpha}+\log \min_i\gamma_i.\label{concavity complement2}
\end{align} 
Moreover, the second inequalities in \eqref{concavity complement} and \eqref{concavity complement2} are valid for arbitrary non-negative $\gamma_1,\ldots,\gamma_r$ with 
$\gamma_1+\ldots+\gamma_r>0$.
\end{prop}
\begin{proof}
By Lemma \ref{Lemma:subadditivity1}, we have
\begin{align*}
\Tr\bz\sigma^{\frac{1-\alpha}{2\alpha}}\bz\sum_{i=1}^r\gamma_i\rho_i\jz\sigma^{\frac{1-\alpha}{2\alpha}}\jz^{\alpha}
&\le
\sum_{i=1}^r\Tr\bz\sigma^{\frac{1-\alpha}{2\alpha}}\gamma_i\rho_i\sigma^{\frac{1-\alpha}{2\alpha}}\jz^{\alpha}\\
&=
\sum_{i=1}^r\gamma_i^{\alpha}\Tr\bz\sigma^{\frac{1-\alpha}{2\alpha}}\rho_i\sigma^{\frac{1-\alpha}{2\alpha}}\jz^{\alpha}
\end{align*}
for $\alpha\in(0,1)$, and the inequality is reversed for $\alpha>1$, which proves the second inequality in \eqref{concavity complement}. The first inequality follows the same way, by noting that $A\mapsto\Tr A^{\alpha}$ is concave for $\alpha\in(0,1]$ and convex for $\alpha\ge 1$.

For the proof of \eqref{concavity complement2}, we may assume that $\rho$ and $\sigma$ are invertible, due to 
\eqref{def by limit}.
We prove the inequalities for $\alpha\in(0,1)$; the proof for $\alpha\in(1,+\infty)$ goes exactly the same way, and 
the cases $\alpha=0,1$ follow by taking the corresponding limit in $\alpha$.
We have
\begin{align*}
\bigrsrn{\sum_{i=1}^r\gamma_i\rho_i}{\sigma}{\alpha}&=
\frac{1}{\alpha-1}\log\frac{Q\nw_{\alpha}\bz\sum_i\gamma_i\rho_i\Big\|\sigma\jz}{\sum_i\gamma_i\Tr\rho_i}\\
&\le
\frac{1}{\alpha-1}\log\frac{\sum_i\gamma_iQ\nw_{\alpha}\bz\rho_i\|\sigma\jz}{\sum_i\gamma_i\Tr\rho_i}\\
&\le
\frac{1}{\alpha-1}\log\min_i\frac{Q\nw_{\alpha}\bz\rho_i\|\sigma\jz}{\Tr\rho_i}\\
&=
\max_i D_{\alpha}\nw(\rho_i\|\sigma),
\end{align*}
where the first inequality is due to the first inequality in \eqref{concavity complement}
(note that $\alpha-1<0$ by assumption),
and the second inequality follows from the trivial inequality 
$Q\nw_{\alpha}\bz\rho_j\|\sigma\jz\ge (\Tr\rho_j)\min_i\frac{Q\nw_{\alpha}\bz\rho_i\|\sigma\jz}{\Tr\rho_i}$
after multiplying both sides by $\gamma_j$ and summing over $j$.
This proves the first inequality in \eqref{concavity complement2}.

The second inequality in \eqref{concavity complement} yields
\begin{align*}
\bigrsrn{\sum_{i=1}^r\gamma_i\rho_i}{\sigma}{\alpha}&=
\frac{1}{\alpha-1}\log \frac{Q\nw_{\alpha}\bz\sum_i\gamma_i\rho_i\Big\|\sigma\jz}{\Tr\sum_i\gamma_i\rho_i}\\
&\ge
\frac{1}{\alpha-1}\log 
\frac{\sum_i\gamma_i^{\alpha} Q\nw_{\alpha}\bz\rho_i\|\sigma\jz}{\sum_i\gamma_i\Tr\rho_i}.
\end{align*}
We have
\begin{align*}
\gamma_i^{\alpha} Q\nw_{\alpha}\bz\rho_i\|\sigma\jz
&\le
(\gamma_i^{\alpha}\Tr\rho_i)
\max_j\frac{\gamma_j^{\alpha} Q\nw_{\alpha}\bz\rho_j\|\sigma\jz}{\gamma_j^{\alpha}\Tr\rho_j}\\
&\le
\gamma_i\Tr\rho_i\bz\max_j\gamma_j^{\alpha-1}\jz
\max_j\frac{Q\nw_{\alpha}\bz\rho_j\|\sigma\jz}{\Tr\rho_j},
\end{align*}
and summing over $i$ and using again that $\alpha-1<0$, we obtain  
\begin{align*}
\frac{1}{\alpha-1}\log 
\frac{\sum_i\gamma_i^{\alpha} Q\nw_{\alpha}\bz\rho_i\|\sigma\jz}{\Tr\sum_i\gamma_i\rho_i}
\ge&
\min_j\frac{1}{\alpha-1}\log \frac{Q\nw_{\alpha}\bz\rho_j\|\sigma\jz}{\Tr\rho_j}\\
&+\log\min_j\gamma_j,
\end{align*}
which is exactly the second inequality in \eqref{concavity complement2}.
\end{proof}

\begin{rem}
Note that \eqref{joint concavity} expresses joint concavity,
whereas in Proposition \ref{prop:complements} we only took a convex 
combination in the first variable and not in the second. It is easy to see that this restriction is in fact necessary. Indeed, let $\rho_1:=\sigma_2:=\pr{x}$ and $\rho_2:=\sigma_1:=\pr{y}$, where $x$ and $y$ are orthogonal unit vectors in some Hilbert space. If we choose $\gamma_1=\gamma_2=1/2$ then $\sum_{i}\gamma_i\rho_i=\sum_i\gamma_i\sigma_i$, and hence
\begin{align*}
&\bigrsrn{\sum_{i=1}^r\gamma_i\rho_i}{\sum_{i=1}^r\gamma_i\sigma_i}{\alpha}=0,\ds\ds\text{while}\\
&\rsrn{\rho_1}{\sigma_1}{\alpha}=\rsrn{\rho_2}{\sigma_2}{\alpha}=+\infty.
\end{align*}
Thus, no inequality of the form $\bigrsrn{\sum_{i=1}^r\gamma_i\rho_i}{\sum_{i=1}^r\gamma_i\sigma_i}{\alpha}\ge 
c_1\min_i\rsrn{\rho_i}{\sigma_i}{\alpha}-c_2$ can hold for any positive constants $c_1$ and $c_2$.

Note also that the first inequality in \eqref{concavity complement} is a special case of the joint concavity inequality
\eqref{joint concavity} for $\alpha\ge 1/2$, but not for the range 
$0<\alpha<1/2$, where joint concavity fails \cite{Renyi_new}. Here again it is important that we took a convex combination only in the first variable of $Q_{\alpha}\nw$.
\end{rem}

\begin{rem}
The same example as in \cite{Tikhonov1,Tikhonov2} shows that 
the power functions $x\mapsto s(\alpha)x^{\alpha}$ are not operator subadditive for any $\alpha\ne 1$, 
i.e., \eqref{subadditivity inequality1} cannot hold without taking the trace.
In fact, for any given $\alpha\in(0,+\infty)\setminus\{1\}$ and any negative number $\nu$, there exist $A,B\in\B(\bC^2)$ such that 
$s(\alpha)(A^{\alpha}+B^{\alpha}-(A+B)^{\alpha})$ has an eigenvalue below $\nu$.
As a consequence, $s(\alpha)Q_{\alpha}\old$ doesn't satisfy a subadditivity inequality similar to the one in 
\eqref{concavity complement} for any $\alpha\ne 1$. However, combining \eqref{concavity complement} with Lemma \ref{Lemma:old-new}, we get
\begin{align*}
&s(\alpha)Q\old_{\alpha}\bz\sum_i\gamma_i\rho_i\Big\|\sigma\jz\\
&\ds\le s(\alpha)\sum_i\gamma_i^{\alpha}Q\old_{\alpha}(\rho_i\|\sigma)^{\alpha}
\norm{\sigma}^{(1-\alpha)^2}(\Tr\rho_i^{\alpha})^{1-\alpha},
\end{align*}
from which it is easy to obtain the inequality
\begin{align*}
D\old_{\alpha}\bz\sum_i\gamma_i\rho_i\Big\|\sigma\jz
\ge &
\alpha\min_i D\old_{\alpha}(\rho_i\|\sigma)
+(\alpha-1)\log\norm{\sigma}\\
&+\log\min_i\left\{\gamma_i\frac{\Tr\rho_i}{\Tr\rho_i^{\alpha}}\right\}
\end{align*}
for all $\alpha\in[0,+\infty)$. When all the $\rho_i$ and $\sigma$ are states on $\hil$, then
combining \eqref{concavity complement2} with \eqref{old-new bounds7} yields
\begin{align*}
D\old_{\alpha}\bz\sum_i\gamma_i\rho_i\Big\|\sigma\jz
\ge &
\alpha\min_i D\old_{\alpha}(\rho_i\|\sigma)\\
&+\log\min_i\gamma_i-|\alpha-1|\log\dim\hil.
\end{align*}
Note that this is a non-trivial inequality even for $\alpha=1$.
\end{rem}

\subsection{R\'enyi capacities}

By a \ki{classical-quantum channel}, or simply a \ki{channel}, $W$ we mean a map $W:\,\X\to\S(\hil)$, where $\X$ is some input alphabet (which can be an arbitrary non-empty set) and $\hil$ is a finite-dimensional Hilbert space. 
We recover the usual notion of a \ki{quantum channel} when $\X=\S(\kil)$ for some Hilbert space $\kil$, and $W$ is a completely positive trace-preserving linear map. A channel $W$ is called \ki{classical} if all the $W(x)$ commute with each other
for every $x\in\X$.

For an input alphabet $\X$, let $\{\delta_x\}_{x\in\X}$ be a set of rank-$1$ orthogonal projections in some Hilbert space $\hil_{\X}$, and
for every channel $W:\,\X\to\S(\hil)$ define
\begin{equation*}
\what W:\,x\mapsto \delta_x\otimes W(x).
\end{equation*}

\begin{rem}
Note that if $\X$ is of infinite cardinality then $\hil_{\X}$ and $\hil_{\X}\otimes\hil$ are infinite-dimensional.  
The state space (the set of density operators) $\S(\kil)$ of an infinite-dimensional Hilbert space $\kil$ is defined to be the 
set of positive semidefinite trace-class operators on $\kil$ with trace $1$. We further introduce the notation 
$\S_f(\kil)$ for the set of finite-rank density operators on $\kil$.
Since $\hil$ is finite-dimensional, we have $\what W(x)\in\S_f(\hil_{\X}\otimes\hil)$ for every $x\in\X$.

In the following, we will consider R\'enyi divergences of the form 
$D_{\alpha}\x(\rho\|\sigma)$ for $\rho,\sigma\in\S_f(\hil_{\X}\otimes\hil)$. Since the operators are of finite rank, one can always restrict the Hilbert space to their joint support and assume that the Hilbert space is finite-dimensional. Hence, the R\'enyi divergences are well-defined, and the results of the previous sections can be used without alteration.
\end{rem}

Let $\P_f(\X)$ denote the set of finitely supported probability measures on $\X$.
The maps $W$ and $\what W$ can naturally be extended to convex maps $W:\,\P_f(\X)\to\S(\hil)$ and $\what W:\,\P_f(\X)\to\S_f(\hil_{\X}\otimes\hil)$, as
\begin{align*}
W(p)&:=\sum_{x\in\X}p(x)W(x),\\
\what W(p)&:=\sum_{x\in\X}p(x)\what W(p)=\sum_{x\in\X}p(x)\delta_x\otimes W(x).
\end{align*}
Note that $\what W(p)$ is a classical-quantum state, and the marginals of $\what W(p)$ are given by
\begin{align*}
\Tr_{\hil}\what W(p)&=\hat p:=\sum_x p(x)\delta_x\ds\ds\ds\ds\ds\text{and}\\
\Tr_{\hil_{\X}}\what W(p)&=W(p).
\end{align*}

For a channel $W:\,\X\to\S(\hil)$, and a probability distribution $p\in\P_f(\X)$, the corresponding \ki{Holevo quantity} $\chi(W,p)$ is the 
\ki{mutual information} in the classical-quantum state $\what W(p)$, defined as
\begin{align}
\chi(W,p)&:=\chi_1(W,p)\nonumber\\
&:=\rsr{\what W(p)}{\hat p\otimes W(p)}{1}
\label{Hol def}\\
&=\inf_{\rho\in\S(\hil_{\X}),\sigma\in\S(\hil)}D_1\bz \what W(p)\|\rho\otimes\sigma\jz
\label{Hol def2}\\
&=
\inf_{\rho\in\S(\hil_{\X})}\rsr{\what W(p)}{\rho\otimes W(p)}{1}\label{Hol def3}\\
&=\inf_{\sigma\in\S(\hil)}\rsr{\what W(p)}{\hat p\otimes \sigma}{1},\label{Hol def4}
\end{align}
where $D_1$ is the relative entropy \eqref{1 limit}, and the equality of the expressions in 
\eqref{Hol def}--\eqref{Hol def4} is easy to verify from the non-negativity of the relative entropy on pairs of states.
The \ki{Holevo capacity} $\chi(W)$ is the maximal mutual information over all possible input distributions, i.e.,
\begin{align}\label{Holevo cap}
\chi(W)&:=\sup_{p\in\P_f(\X)}\chi(W,p).
\end{align}

By the Holevo-Schumacher-Westmoreland theorem \cite{Holevo,SW}, $\chi(W)$ is the optimal rate at which classical information can be sent through the channel with asymptotically vanishing error; see Section \ref{sec:capacity} for details. It is also known that the asymptotic behaviour of the decoding error probability for rates below or above the Holevo capacity can be described by
the $\alpha$-capacities of the channel; see \cite{Csiszar} for the case of classical channels, and \cite{MO3} for the case of classical-quantum channels in the strong converse domain. Below we give the definition of the $\alpha$-capacities, and collect a few properties that we will need in Section \ref{sec:capacity}.
\smallskip

If we replace $D_1$ with some $D_{\alpha}\x$ with $\alpha\ne 1$ then the expressions in 
\eqref{Hol def}--\eqref{Hol def4} need not be equal anymore, and we choose the one in 
\eqref{Hol def4} to define the \ki{$\alpha$-mutual information} in $\what W(p)$ as 
\begin{align}\label{alpha mutual}
\hol{\alpha}\x(W,p):=\inf_{\sigma\in\S(\hil)} D_{\alpha}\x\bz\what W(p)\|\hat p\otimes \sigma\jz,
\end{align}
where $\typ=\oldd$ or $\typ=\nww$, and $\alpha\in(0,+\infty)$. The corresponding \ki{$\alpha$-capacities} are then defined as
\begin{align}\label{alpha cap}
\hol{\alpha}\x(W):=\sup_{p\in\P_f(\X)}\hol{\alpha}\x(W,p).
\end{align}

\begin{rem}
Choosing to optimize only over the state of the output system in \eqref{alpha mutual} might seem somewhat arbitrary, especially when 
compared to the more symmetric forms in \eqref{Hol def} and \eqref{Hol def2}. There are various reasons, though, to prefer this 
seemingly less natural optimization. One is the additivity properties \eqref{chi additivity} and \eqref{chi additivity2}, which are crucial for applications, and which are not known (at least to the author) to hold with the types of optimization in \eqref{Hol def2} and \eqref{Hol def3}.
Another is that
the capacity formula \eqref{alpha cap}, based on \eqref{alpha mutual} has an operational interpretation (for $\alpha\ge 1/2$)
as a generalized cutoff-rate \cite{Csiszar}, showing that this is probably the right (in the sense of operationally justified)
notion of $\alpha$-capacity, at least for classical channels, where $\hol{\alpha}\nw(W)=\hol{\alpha}\old(W)$.
A recent result \cite{MO3} shows that the same operational interpretation holds for $\hol{\alpha}\nw(W)$ and $\alpha\ge 1$ in the case of classical-quantum channels. No such operational interpretations are known for the $\alpha$-capacities based on the optimizations in \eqref{Hol def}--\eqref{Hol def3}.
\end{rem}

As it was pointed out in \cite{KW,Sibson}, and is easy to verify,
\begin{align}
\rsro{\what W(p)}{\hat p\otimes \sigma}{\alpha}=&
\frac{\alpha}{\alpha-1}\log\Tr\omega(W,p)\nonumber\\
&+D_{\alpha}\old\bz \bar\omega(W,p)\|\sigma\jz\label{Sibson}
\end{align}
for any state $\sigma$, 
where 
$\bar\omega(W,p):=\omega(W,p)/\Tr\omega(W,p)$ and
$\omega(W,p):=\bz\sum_x p(x)W(x)^{\alpha}\jz^{\frac{1}{\alpha}}$.
Since $D_{\alpha}\old$ is non-negative on pairs of density operators, we get
\begin{align}
\hol{\alpha}\old(W,p)&=\frac{\alpha}{\alpha-1}\log\Tr\omega(W,p)\nonumber\\
&=
\frac{\alpha}{\alpha-1}\log\Tr\bz\sum_x p(x)W(x)^{\alpha}\jz^{\frac{1}{\alpha}}.\label{explicit alpha-Holevo}
\end{align}
No such explicit formula is known for $\hol{\alpha}\nw(W,p)$.
\smallskip

Monotonicity of $D_{\alpha}$ in $\alpha$ yields that 
$\hol{\alpha}\old(W,p)$ 
is also monotonic increasing in $\alpha$. 
A simple minimax argument shows (see, e.g.~\cite[Lemma B.3]{MH}) that 
\begin{equation}\label{chi limit}
\lim_{\alpha\to 1}\hol{\alpha}\old(W,p)
=\chi(W,p),
\end{equation}
where $\chi(W,p)$ is the Holevo quantity. 
We will need the following generalization of this in Section \ref{sec:capacity}:

\begin{Lemma}\label{Lemma:inf Hol limit}
Let $W_i:\,\X\to\S(\hil),\,i\in\I$, be a set of channels, with some arbitrary index set $\I$, 
and let $p\in\P_f(\X)$ be a finitely supported probability distribution on $\X$. Then
\begin{align*}
\lim_{\alpha\to 1}\inf_{i\in\I}\hol{\alpha}(W_i,p)=\inf_{i\in\I}\chi(W_i,p).
\end{align*}
\end{Lemma}
\begin{proof}
It is easy to see from the explicit formulas \eqref{Hol def} and \eqref{explicit alpha-Holevo} that
the values of $\hol{\alpha}(W_i,p)$ only depend on the values of $W_i$ at the points of $\supp p$, which is, by assumption, a 
finite set. Hence, we can assume without loss of generality that $\X$ is finite, and therefore the vector space
 of functions from $\X$ to $\B(\hil)$, denoted by $\B(\hil)^{\X}$,
 is finite-dimensional.
Taking any norm on $\B(\hil)^{\X}$, the closure $C$ of $\{W_i\}_{i\in\I}$ is compact, and 
\eqref{Hol def} and \eqref{explicit alpha-Holevo} show that $W\mapsto \hol{\alpha}(W,p)$ is continuous on $C$ for every 
$\alpha\in(0,+\infty)$. Since 
$\alpha\mapsto\hol{\alpha}(W_i,p)$ is monotone increasing in $\alpha$, the same argument as in the proof of Lemma \ref{Lemma:dist limit} yields the assertion.
\end{proof}
\smallskip

We close this section with a few observations about the $\alpha$-capacities. Although we will not need these 
for the coding theorems presented later, they might be interesting for future applications.

First, note that $\max\{\Tr\what W(p)^{0},\Tr(\hat p\otimes \sigma)^0\}\le|\supp p|\dim\hil$, 
where $|\supp p|$ denotes the cardinality of the support of $p$,
and \eqref{old-new bounds7} yields that
\begin{align}
\hol{\alpha}\old(W,p)
&\ge
\hol{\alpha}\nw(W,p)\nonumber\\
&\ge
\alpha\hol{\alpha}\nw(W,p)-|\alpha-1|\log\bz|\supp p|\dim\hil\jz\label{weak capacity bounds}
\end{align}
for every $\alpha\in(0,+\infty)$.
Hence, in the setting of Lemma \ref{Lemma:inf Hol limit}, we also have
\begin{align*}
\lim_{\alpha\to 1}\inf_{i\in\I}\hol{\alpha}\nw(W_i,p)=\inf_{i\in\I}\chi(W_i,p).
\end{align*}

Next, we consider the limit of the $\alpha$-capacities as $\alpha\to 1$.
It was shown in \cite[Proposition B.5]{MH} that if $\ran W:=\{W(x)\,:\,x\in\X\}$ is compact then 
\begin{equation}\label{alpha cap lim}
\lim_{\alpha\to 1}\hol{\alpha}\old(W)=\chi(W).
\end{equation}
To obtain the same limit relation for $\hol{\alpha}\nw(W)$, we will need the following improvement of
\eqref{weak capacity bounds}:

\begin{Lemma}\label{Lemma:chi bounds}
Let $W:\,\X\to\S(\hil)$ be a channel, and $\alpha\in(0,+\infty)$.
For any $p\in\P_f(\X)$ and any $\sigma\in\S(\hil)$, we have
\begin{align}
\rsrn{\what W(p)}{\hat p\otimes \sigma}{\alpha}
\ge&
\alpha\rsro{\what W(p)}{\hat p\otimes \sigma}{\alpha}\nonumber\\
&-|\alpha-1|\log(\dim\hil),\label{Lemma:chi bounds2}
\end{align}
and hence,
\begin{align}\label{Lemma:chi bounds3}
\hol{\alpha}\old(W,p)\ge
\hol{\alpha}\nw(W,p)\ge
\alpha\hol{\alpha}\old(W,p)-|\alpha-1|\log(\dim\hil).
\end{align}
\end{Lemma}
\begin{proof}
First note that we can assume without loss of generality that 
$\supp \what W(p)\subseteq\supp(\hat p\otimes \sigma)$, since otherwise 
\eqref{Lemma:chi bounds2} holds trivially.
Let us fix $\alpha>1$. 
By \eqref{old-new bounds10} we have, for every $x\in\X$, that
$\Tr\bz W(x)^{\half}\sigma^{\frac{1-\alpha}{\alpha}}W(x)^{\half}\jz^{\alpha}\ge (\dim\hil)^{-(\alpha-1)^2}\bz\Tr W(x)^{\alpha}\sigma^{1-\alpha}\jz^{\alpha}$,  and hence,
\begin{align*}
&\rsrn{\what W(p)}{\hat p\otimes \sigma}{\alpha}\\
&=
\frac{1}{\alpha-1}\log\sum_x p(x)\Tr\bz W(x)^{\half}\sigma^{\frac{1-\alpha}{\alpha}}W(x)^{\half}\jz^{\alpha}\\
&\ge
\frac{1}{\alpha-1}\log\sum_x p(x)\bz\Tr W(x)^{\alpha}\sigma^{1-\alpha}\jz^{\alpha}\\
&\ds-(\alpha-1)\log(\dim\hil)\\
&\ge
\frac{1}{\alpha-1}\log\bz\sum_x p(x)\Tr W(x)^{\alpha}\sigma^{1-\alpha}\jz^{\alpha}\\
&\ds\s-(\alpha-1)\log(\dim\hil)\\
&=
\alpha\rsro{\what W(p)}{\hat p\otimes \sigma}{\alpha}-(\alpha-1)\log(\dim\hil),
\end{align*}
where the second inequality is due to the convexity of $x\mapsto x^{\alpha}$. The proof for $\alpha\in(0,1)$ goes exactly the same way. This proves \eqref{Lemma:chi bounds2}, and taking the infimum in $\sigma$ yields \eqref{Lemma:chi bounds3}.
\end{proof}

Lemma \ref{Lemma:chi bounds} and \eqref{alpha cap lim} yield immediately that
\begin{equation}\label{new capacity limit}
\lim_{\alpha\to 1}\hol{\alpha}\nw(W)=\chi(W).
\end{equation}

\begin{rem}
Carath\'eodory's theorem and the explicit formula \eqref{explicit alpha-Holevo} imply that in the definition 
$\hol{\alpha}\old(W):=\sup_{p\in\P_f(\X)}\hol{\alpha}\old(W,p)$
it is enough to consider probability distributions with $|\supp p|\le(\dim\hil)^2+1$. However, this is not known for $\hol{\alpha}\nw(W)$, and hence 
\eqref{weak capacity bounds} is insufficient to derive \eqref{new capacity limit}.
\end{rem}

\begin{rem}
For quantum channels, the limit relation $\lim_{\alpha\searrow 1}\hol{\alpha}\nw(W)=\chi(W)$ was proved
by a very different method in \cite{WWY}.
\end{rem}
\medskip

Finally, we point out a connection between $\alpha$-capacities and a special case of a famous convexity result by Carlen and Lieb \cite{CL,CL2}. 
For any finite-dimensional Hilbert space $\hil$ and 
$A_1,\ldots,A_n\in\B(\hil)_+$, define
\begin{equation*}
\Phi_{\alpha,q}(A_1,\ldots,A_n):=\bz\Tr\left[\bz\sum_{i=1}^n A_i^{\alpha}\jz^{q/\alpha}\right]\jz^{1/q},
\end{equation*}
$\alpha\ge 0,\,q> 0$.
Theorem 1.1 in \cite{CL2} says that for any finite-dimensional Hilbert space $\hil$, $\Phi_{\alpha,q}$ is concave
on $\bz\B(\hil)_+\jz^n$ for $0\le \alpha\le q\le 1$, and convex for all $1\le \alpha\le 2$ and $q\ge 1$.
Below we give an elementary proof of the following weaker statement:
$\Phi_{\alpha,1}^{\alpha}$ is concave for $\alpha\in(0,1)$ and convex for $\alpha\in(1,2]$.

For a set $\X$, a
finitely supported non-negative function $p:\,\X\to\bR_+$,
and a finite-dimensional Hilbert space $\hil$,
let $\hat\Phi_{p,\hil,\alpha}:\,\bz\B(\hil)_+\jz^{\X}\to\bR_+$ be defined as
\begin{equation*}
\hat\Phi_{p,\hil,\alpha}(W):=\bz\Tr\bz\sum_{x\in\X}p(x)W(x)^{\alpha}\jz^{1/\alpha}\jz^\alpha,
\end{equation*}
for every $W\in\bz\B(\hil)_+\jz^{\X}$.
The following Proposition is equivalent to our assertion:

\begin{prop}\label{prop:CL}
For any $\X,\,p$ and $\hil$, $\hat\Phi_{p,\hil,\alpha}$
is concave on $\bz\B(\hil)_+\jz^{\X}$ 
for $\alpha\in(0,1)$ and convex for $\alpha\in(1,2]$.
\end{prop}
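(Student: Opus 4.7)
My plan is to reduce the statement to Lieb's concavity theorem and Ando's convexity theorem via a variational representation of $\hat\Phi_{p,\hil,\alpha}$ as a partial supremum (for $\alpha\in(0,1)$) or a partial infimum (for $\alpha\in(1,2]$) of a function that is jointly concave, respectively convex, in $W$ and an auxiliary $\sigma\in\S(\hil)$.

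The first step is to note that $\hat\Phi_{p,\hil,\alpha}(W)=\bigl\|\sum_{x}p(x)W(x)^{\alpha}\bigr\|_{1/\alpha}$ and invoke Schatten duality for $\alpha\in(0,1)$ with the substitution $B=\sigma^{1-\alpha}$, which turns the unit-ball constraint $\|B\|_{1/(1-\alpha)}\le 1$ into $\Tr\sigma\le 1$, to obtain
\[
\hat\Phi_{p,\hil,\alpha}(W)=\sup_{\sigma\ge 0,\,\Tr\sigma\le 1}\sum_{x}p(x)\Tr W(x)^{\alpha}\sigma^{1-\alpha}.
\]
For $\alpha\in(1,2]$ the analogous identity comes from the equality case of Hölder: a direct calculation in the commuting (diagonal) situation, combined with von Neumann's trace inequality to reduce the general case to the commuting one, shows that the minimum of $\sigma\mapsto\Tr A\sigma^{1-\alpha}$ over $\sigma>0,\,\Tr\sigma=1$ is attained at $\sigma^{*}\propto A^{1/\alpha}$ and equals $(\Tr A^{1/\alpha})^{\alpha}$. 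Applied with $A:=\sum_{x}p(x)W(x)^{\alpha}$ this yields
\[
\hat\Phi_{p,\hil,\alpha}(W)=\inf_{\sigma>0,\,\Tr\sigma=1}\sum_{x}p(x)\Tr W(x)^{\alpha}\sigma^{1-\alpha}.
\]

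Next I would invoke Lieb's concavity theorem \cite{Lieb} (for $\alpha\in(0,1)$) and Ando's convexity theorem \cite{Ando} (for $\alpha\in(1,2]$): both assert that $(X,Y)\mapsto\Tr X^{\alpha}Y^{1-\alpha}$ is jointly concave, respectively convex, on $\B(\hil)_+\times\B(\hil)_+$. Summing with the nonnegative weights $p(x)$ preserves these properties, so $(W,\sigma)\mapsto\sum_{x}p(x)\Tr W(x)^{\alpha}\sigma^{1-\alpha}$ is jointly concave on $(\B(\hil)_+)^{\X}\times\S(\hil)$ for $\alpha\in(0,1)$ and jointly convex for $\alpha\in(1,2]$.

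Finally I would conclude by applying the elementary fact that the partial supremum of a jointly concave function over a convex set is concave in the remaining variable, and dually for the infimum of a jointly convex function; feeding the two variational identities from the first step into this fact delivers concavity of $\hat\Phi_{p,\hil,\alpha}$ for $\alpha\in(0,1)$ and convexity for $\alpha\in(1,2]$. The main technical nuisance I anticipate is the careful treatment of support issues in the $\alpha>1$ formula, where $\sigma^{1-\alpha}$ diverges on $\ker\sigma$ and so the infimum must be restricted to $\sigma$ whose support contains that of $A$; this I would manage by the perturbation $A\mapsto A+\varepsilon I$ together with continuity in $\varepsilon\searrow 0$.
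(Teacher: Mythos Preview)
Your proposal is correct and follows essentially the same route as the paper. Both arguments rest on the variational identity
\[
\hat\Phi_{p,\hil,\alpha}(W)=\begin{cases}\displaystyle\sup_{\sigma\in\S(\hil)}\sum_x p(x)\Tr W(x)^{\alpha}\sigma^{1-\alpha},&\alpha\in(0,1),\\[1ex]\displaystyle\inf_{\sigma\in\S(\hil)}\sum_x p(x)\Tr W(x)^{\alpha}\sigma^{1-\alpha},&\alpha\in(1,2],\end{cases}
\]
followed by Lieb's concavity / Ando's convexity for $(X,Y)\mapsto\Tr X^{\alpha}Y^{1-\alpha}$ and the standard ``$\sup$ of jointly concave is concave / $\inf$ of jointly convex is convex'' argument. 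The only difference is packaging: the paper obtains the variational identity by recognizing $\sum_x p(x)\Tr W(x)^{\alpha}\sigma^{1-\alpha}=Q_{\alpha}\old(\what W(p)\|\hat p\otimes\sigma)$ and invoking the Sibson-type identity \eqref{Sibson}--\eqref{explicit alpha-Holevo} already established in the paper, whereas you derive it directly from Schatten duality (for $\alpha<1$) and a von Neumann trace inequality / Lagrange computation (for $\alpha>1$). Your derivation is a touch more self-contained; the paper's is shorter because the identity was already available.
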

\begin{proof}
Exactly the same way as in \eqref{Sibson}--\eqref{explicit alpha-Holevo}, we can see that 
\begin{align}
&\frac{\alpha}{\alpha-1}\log\Tr\bz\sum_x p(x)W(x)^{\alpha}\jz^{\frac{1}{\alpha}}\nonumber\\
&\ds=
\min_{\sigma\in\S(\hil)}D_{\alpha}\old\bz\what W(p)\|\hat p\otimes \sigma\jz.\label{khi representation}
\end{align} 
Assume for the rest that $\alpha\in(1,2]$; the proof for the case $\alpha\in(0,1)$ goes exactly the same way. 
Let $r\in\bN$, $W_1,\ldots,W_r\in(\B(\hil)_+)^{\X}$, and 
$\gamma_1,\ldots,\gamma_r$ be a probability distribution. Then
\begin{align*}
&\hat\Phi_{p,\hil,\alpha}\bz\sum_i\gamma_i W_i\jz\\
&\ds=
\min_{\sigma\in\S(\hil)}Q_{\alpha}\old\bz\sum_i\gamma_i\what W(p)\Big\|\hat p\otimes \sigma\jz\\
&\ds=
\min_{\sigma_1,\ldots,\sigma_r\in\S(\hil)}Q_{\alpha}\old\bz\sum_i\gamma_i\what W(p)\Big\|\hat p\otimes \sum_i\gamma_i\sigma_i\jz\\
&\ds\le
\min_{\sigma_1,\ldots,\sigma_r\in\S(\hil)}\sum_i\gamma_i Q_{\alpha}\old\bz\what W(p)\|\hat p\otimes\sigma_i\jz\\
&\ds=
\sum_i\gamma_i\min_{\sigma_i}Q_{\alpha}\old\bz\what W(p)\|\hat p\otimes\sigma_i\jz\\
&\ds=
\sum_i\gamma_i\hat\Phi_{p,\hil,\alpha}\bz W_i\jz,
\end{align*} 
where the first and the last identities are due to \eqref{khi representation}, and
the inequality follows from the joint convexity of $Q_{\alpha}\old$
 \cite{Ando,Petz}. (In the case $\alpha\in(0,1)$, we have to use joint concavity \cite{Lieb,Petz}.)
\end{proof}

\section{Coding theorems}
\label{sec:applications}

\subsection{Quantum Stein's Lemma with composite null-hypothesis}
\label{sec:Stein}

Consider the asymptotic hypothesis testing problem with null-hypothesis $H_0:\,\N_n\subset\S(\hil_n)$ and alternative hypothesis
$H_1:\,\sigma_n\in\S(\hil_n)$, $n\in\bN$, where $\hil_n$ is some finite-dimensional Hilbert space. 
Our goal is to decide between these two hypotheses based on the outcome of a binary 
POVM $(T_n(0),T_n(1))$ on $\hil_n$, where $0$ and $1$ indicate
the acceptance of $H_0$ and $H_1$, respectively.
Since $T_n(1)=I-T_n(0)$, the POVM is uniquely determined by $T_n=T_n(0)$, and 
the only constraint on $T_n$ is that $0\le T_n\le I_n$. We will call such operators \ki{tests}.
Given a test $T_n$, the probability of mistaking $H_0$ for $H_1$ (type I error) and
the probability of mistaking $H_1$ for $H_0$ (type II error) are given by
\begin{align*}
\alpha_n(T_n)&:=\sup_{\rho_n\in\N_n}\Tr\rho_n(I-T_n),\ds\text{(type I)},\ds\ds\text{and}\\
\beta_n(T_n)&:=\Tr\sigma_n T_n,\ds\text{(type II)}.
\end{align*}
\begin{defin}\label{def:direct rate}
We say that a rate $R\ge 0$ is \ki{achievable} if 
there exists a sequence of tests $T_n,\,n\in\bN$, with
\begin{align*}
\lim_{n\to+\infty}\alpha_n(T_n)=0\ds\ds\ds\text{and}\ds\ds\ds
\limsup_{n\to+\infty}\frac{1}{n}\log\beta_n(T_n)\le -R.
\end{align*} 
The largest achievable rate $R(\{\N_n\}_{n\in\bN}\|\{\sigma_n\}_{n\in\bN})$ is the \ki{direct rate} of the hypothesis testing problem.
\end{defin}

For the bigger part of this section, we assume that $\hil_n=\hil^{\otimes n},\,n\in\bN$, where $\hil=\hil_1$, and that the alternative hpothesis is i.i.d., i.e., $\sigma_n=\sigma^{\otimes n},\,n\in\bN$, 
with $\sigma=\sigma_1$. We say that the null-hypothesis is \ki{composite i.i.d.} if there exists a set $\N\subset\S(\hil)$ such that for all $n\in\bN$,
$\N_n=\N\notimes:=\{\rho^{\otimes n}:\,\rho\in\N\}$. The null-hypothesis is  \ki{simple i.i.d.} if $\N$ consists of one single element, i.e.,
$\N=\{\rho\}$ for some $\rho\in\S(\hil)$. According to the quantum Stein's Lemma \cite{HP,ON2}, the direct rate in the simple i.i.d.~case is given by $D_1(\rho\|\sigma)$.
The case of the general composite null-hypothesis was treated in \cite{BDKSSSz} under the name of quantum Sanov theorem. There it was shown that there exists a sequence of tests
$\{T_n\}_{n\in\bN}$ such that $\lim_{n\to+\infty}\Tr\rho^{\otimes n}(I-T_n)=0$ for every $\rho\in\N$, and 
$\limsup_{n\to+\infty}\frac{1}{n}\log\beta_n(T_n)\le -D_1(\N\|\rho)$, where 
$D_1(\N\|\rho):=\inf_{\rho\in\N}D_1(\rho\|\sigma)$. Note that this is somewhat weaker than 
$D_1(\N\|\rho)$ being achievable in the sense of Definition \ref{def:direct rate}.
Achievability in this stronger sense has been shown very recently in \cite{Notzel}, using the representation theory of the symmetric group and the method of types.
The proof in both papers followed the approach 
in \cite{HP} of reducing the problem to a classical hypothesis testing problem by projecting all states onto the commutative algebra generated by
$\{\sigma^{\otimes n}\}_{n\in\bN}$. 

Below we use a different proof technique to show that $D_1(\N\|\rho)$ is achievable in the sense of Defintion 
\ref{def:direct rate}. Our proof is based solely on Audenaert's trace inequality (Lemma \ref{Lemma:Aud}) and the subadditivity property of $Q_{\alpha}\nw$, given in 
Proposition \ref{prop:complements}. We obtain explicit upper bounds on the error probabilities for any finite $n\in\bN$ for a sequence of Neyman-Pearson type tests.
Moreover, if a $\delta$-net can be explicitly constructed for $\N$ for every $\delta>0$ (this is trivially satisfied when $\N$ is finite) then the tests can also be constructed 
explicitly. In \cite{BDKSSSz}, Stein's Lemma was stated with weak converse, while the results of \cite{Notzel} imply a strong converse. Here we use Nagaoka's method to further 
strengthen the converse part by giving exlicit bounds on the exponential rate with which the worst-case type I success probability goes to zero when the 
type II error decays with a rate larger than the optimal rate $D_1(\N\|\rho)$. 

Note that our proof technique doesn't actually rely on the i.i.d.~assumption, as we demonstrate in Theorem \ref{thm:correlated Stein}, where we give 
achievability bounds in the general correlated scenario. However, in the most general case we have to restrict to a finite null-hypothesis.
We show examples in Remark \ref{rem:correlated Stein} where the achievable rate of Theorem \ref{thm:correlated Stein} can be expressed as the 
regularized relative entropy distance of the null-hypothesis and the alternative hypothesis, giving a direct generalization of the i.i.d.~case. 
These results complement those of \cite{correlated-Sanov}, where it was shown that 
if $\Theta$ is a set of ergodic states on a spin chain, and $\Phi$ is a state 
on the spin chain such that for every $\Psi\in\Theta$,
Stein's Lemma holds for the simple hypothesis testing problem $H_0:\,\Psi,\,H_1:\,\Phi$, 
then it also holds for the composite hypothesis testing problem $H_0:\,\Theta,\,H_1:\,\Phi$. This was also extended
in \cite{correlated-Sanov}
to the case where $\Theta$ consists of translation-invariant states, using ergodic decomposition.
\medskip

Now let $\N\subset\S(\hil)$ be a non-empty set of states, and let $\sigma\in\B(\hil)_+$ be a positive semidefinite 
operator such that 
\begin{equation}\label{support condition}
\supp\rho\subseteq\supp\sigma,\ds\ds\ds \rho\in\N.
\end{equation}
Note that 
in hypothesis testing, $\sigma$ is usually assumed to be a state on $\hil$; however,
the proof for Stein's Lemma works the same way for a general positive semidefinite $\sigma$, and
considering this more general case is actually useful e.g., for state compression. 
Let
\begin{equation}\label{psi}
\psi\nw(t):=\sup_{\rho\in\N}\log Q_t\nw(\rho\|\sigma),\ds\ds\ds t>0,
\end{equation}
and for every $a\in\bR$, let
\begin{align}
\vfi\nw(a)&:=\sup_{0<t\le 1}\{at-\psi\nw(t)\},\nonumber\\
\hat\vfi\nw(a)&:=\sup_{0<t\le 1}\{a(t-1)-\psi\nw(t)\}=\vfi\nw(a)-a.\label{phi}
\end{align}
Note that $\vfi\nw$ is the Legendre-Fenchel transform of $\psi\nw$ on $(0,1]$.

\begin{thm}\label{thm:Stein}
For every $n\in\bN$, let $\N(n)\subset\N$ be a finite subset, and let 
$\delta(N(n)):=\sup_{\rho\in\N}\inf_{\rho'\in\N(n)}\norm{\rho-\rho'}_1$.
For every $a\in\bR$, let
$S_{n,a}:=\left\{e^{-na}\sum_{\rho\in\N(n)}\rho^{\otimes n}-\sigma^{\otimes n}>0\right\}$ be a Neyman-Pearson test.
Then
\begin{align}
\sup_{\rho\in\N}\Tr\rho^{\otimes n}(I-S_{n,a})&\le |\N(n)|e^{-n\hat\vfi\nw(a)}+n\delta(N(n)),
\label{finiten bounds1}\\
\Tr\sigma^{\otimes n}S_{n,a}&\le|\N(n)|e^{-n\vfi\nw(a)}.\label{finiten bounds}
\end{align}
In particular, let $\delta_n:=e^{-n\kappa}$ for some $\kappa>0$, and $\N(n):=\N_{\delta_n}\subset\N$ as in Lemma \ref{Lemma:state approximation}, with $V:=\B(\hil)\sa$ equipped with the trace-norm, and let $\Delta:=\dim_{\bR}V$. Then
\begin{align}
\limsup_{n\to+\infty}\frac{1}{n}\log\alpha_n(S_{n,a})&\le -\min\{\kappa,\hat\vfi\nw(a)-\kappa \Delta\},\label{NP upper1}\\
\limsup_{n\to+\infty}\frac{1}{n}\log\beta_n(S_{n,a})&\le -(\vfi\nw(a)-\kappa \Delta).\label{NP upper2}
\end{align}
\end{thm}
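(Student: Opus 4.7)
The plan is to establish both finite-$n$ bounds simultaneously by applying Audenaert's inequality (Lemma \ref{lemma:Aud}) to the pair $A:=A_n:=\sum_{\rho\in\N(n)}\rho^{\otimes n}$ and $B:=e^{na}\sigma^{\otimes n}$, noting that $S_{n,a}=\{A-B>0\}$ by definition. For any $t\in(0,1)$ the lemma gives
\[
\Tr A_n(I-S_{n,a})\le e^{na(1-t)}\Tr A_n^{t}(\sigma^{\otimes n})^{1-t},\qquad \Tr\sigma^{\otimes n}S_{n,a}\le e^{-nat}\Tr A_n^{t}(\sigma^{\otimes n})^{1-t}.
\]
Since $\rho^{\otimes n}\le A_n$ for every $\rho\in\N(n)$, the first inequality also controls $\Tr\rho^{\otimes n}(I-S_{n,a})$ for such $\rho$, so both error quantities reduce to estimating the common factor $\Tr A_n^{t}(\sigma^{\otimes n})^{1-t}$.

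To bound that factor I would use the chain: first, the first inequality in \eqref{old-new bounds0} gives $\Tr A_n^{t}(\sigma^{\otimes n})^{1-t}\le Q_{t}\nw(A_n\|\sigma^{\otimes n})$ for $t\in(0,1)$; second, the second inequality of \eqref{concavity complement} in Proposition \ref{prop:complements} (applied in the arbitrary non-negative weight form with all weights equal to $1$) yields
\[
Q_{t}\nw(A_n\|\sigma^{\otimes n})\;\le\;\sum_{\rho\in\N(n)}Q_{t}\nw(\rho^{\otimes n}\|\sigma^{\otimes n});
\]
finally, multiplicativity of $Q_{t}\nw$ under tensor products gives $Q_{t}\nw(\rho^{\otimes n}\|\sigma^{\otimes n})=Q_{t}\nw(\rho\|\sigma)^{n}\le e^{n\psi(t)}$. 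Combining these, $\Tr A_n^{t}(\sigma^{\otimes n})^{1-t}\le|\N(n)|\,e^{n\psi(t)}$, and plugging into the two Audenaert bounds and taking infimum over $t\in(0,1]$ produces $\Tr\sigma^{\otimes n}S_{n,a}\le|\N(n)|e^{-n\vfi(a)}$, which is \eqref{finiten bounds}, together with $\Tr\rho^{\otimes n}(I-S_{n,a})\le|\N(n)|e^{-n\hat\vfi(a)}$ for $\rho\in\N(n)$.

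To extend the type I bound to arbitrary $\rho\in\N$, I would pick $\rho'\in\N(n)$ attaining $\inf_{\rho''\in\N(n)}\norm{\rho-\rho''}_1\le\delta(\N(n))$ (the infimum is attained since $\N(n)$ is finite) and use the telescoping identity $\rho^{\otimes n}-(\rho')^{\otimes n}=\sum_{k=1}^{n}\rho^{\otimes(k-1)}\otimes(\rho-\rho')\otimes(\rho')^{\otimes(n-k)}$ to obtain $\norm{\rho^{\otimes n}-(\rho')^{\otimes n}}_1\le n\,\delta(\N(n))$. Since $0\le I-S_{n,a}\le I$, this yields
\[
\Tr\rho^{\otimes n}(I-S_{n,a})\;\le\;\Tr(\rho')^{\otimes n}(I-S_{n,a})+n\,\delta(\N(n))\;\le\;|\N(n)|e^{-n\hat\vfi(a)}+n\,\delta(\N(n)),
\]
which is \eqref{finiten bounds1}.

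Finally, for \eqref{NP upper1}--\eqref{NP upper2} I would instantiate $\N(n):=\N_{\delta_n}$ from Lemma \ref{lemma:state approximation} with $\delta_n=e^{-n\kappa}$, so that $|\N(n)|\le (1+2e^{n\kappa})^{D(\hil)}$ and hence $\frac{1}{n}\log|\N(n)|\to\kappa D(\hil)$. Substituting into \eqref{finiten bounds} yields \eqref{NP upper2} immediately, and substituting into \eqref{finiten bounds1} gives \eqref{NP upper1} upon noting $\frac{1}{n}\log(u_n+v_n)$ tends to the larger of the two exponential rates with $u_n=|\N(n)|e^{-n\hat\vfi(a)}$ and $v_n=n e^{-n\kappa}$. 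No single step is genuinely difficult; the main conceptual move — and the only non-routine choice — is applying Audenaert's inequality to the \emph{unnormalized} sum $A_n$, so that the new-R\'enyi subadditivity of Proposition \ref{prop:complements} can convert the sum into a mere multiplicative factor $|\N(n)|$ that is afterwards absorbed into the exponent via a net of exponentially controlled cardinality.
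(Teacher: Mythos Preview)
Your argument is correct and is essentially identical to the paper's own proof: the paper applies Lemma~\ref{lemma:Aud} to $A=e^{-na}\sum_{\rho\in\N(n)}\rho^{\otimes n}$ and $B=\sigma^{\otimes n}$ (equivalent to your choice up to a common factor $e^{na}$), then bounds $\Tr\bar\rho_n^{\,t}\sigma_n^{1-t}$ via the same chain \eqref{old-new bounds0}\,$\to$\,\eqref{concavity complement}\,$\to$\,multiplicativity\,$\to$\,$|\N(n)|e^{n\psi(t)}$, and handles the extension from $\N(n)$ to $\N$ exactly as you do via $\norm{\rho^{\otimes n}-(\rho')^{\otimes n}}_1\le n\,\delta(\N(n))$. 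Your write-up even makes explicit the telescoping estimate that the paper compresses into the phrase ``submultiplicativity of the trace-norm on tensor products''.
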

\begin{proof}
For every $n\in\bN$, let $\bar\rho_n:=\sum_{\rho\in\N(n)}\rho^{\otimes n}$, $\sigma_n:=\sigma^{\otimes n}$. 
Applying Lemma \ref{Lemma:Aud} to $A:=e^{-na}\bar\rho_n$ and $B:=\sigma_n$ for some fixed $a\in\bR$, we get 
\begin{align}
e_n(a)&:=e^{-na}\Tr\bar\rho_n (I-S_{n,a})+\Tr\sigma_nS_{n,a}\nonumber\\
&\le
e^{-nat}\Tr\bar\rho_n^t\sigma_n^{1-t}\label{ena upper bound}
\end{align}
for every $t\in[0,1]$. This we can further upper bound as
\begin{align}
\Tr\bar\rho_n^t\sigma_n^{1-t}&\le 
Q_t\nw\bz\bar\rho_n\|\sigma_n\jz
\le 
\sum_{\rho\in\N(n)}Q_t\nw\bz\rho^{\otimes n}\|\sigma^{\otimes n}\jz\nonumber\\
&\le 
|\N(n)|\sup_{\rho\in\N}Q_t\nw\bz\rho^{\otimes n}\|\sigma^{\otimes n}\jz\nonumber\\
&=
|\N(n)|\sup_{\rho\in\N}\bz Q_t\nw\bz\rho\|\sigma\jz\jz^n\nonumber\\
&=
|\N(n)|e^{n\psi\nw(t)},\label{Stein proof1}
\end{align}
where the first inequality is due to Lemma \ref{Lemma:old-new}, the second inequality is due to  \eqref{concavity 
complement}, the third inequality is obvious, the succeeding identity follows from the definition \eqref{Q def}, and the last 
identity is due to the definition of $\psi\nw$.
Since \eqref{ena upper bound} holds for every $t\in(0,1]$, together with \eqref{Stein proof1} it yields
$e_n(a)\le |\N(n)|e^{-n\vfi\nw(a)}$. Hence we have $\Tr\sigma_nS_{n,a}\le e_n(a)\le |\N(n)|e^{-n\vfi\nw(a)}$, 
proving \eqref{finiten bounds}. Similarly, $\Tr\bar\rho_n(I-S_{n,a})\le e^{na}e_n(a)$ yields
\begin{align}
\sup_{\rho\in \N(n)}\Tr\rho^{\otimes n}(I-S_{n,a})
&\le 
\Tr\bar\rho_n(I-S_{n,a})\nonumber\\
&\le 
e^{na}|\N(n)|e^{-n\vfi\nw(a)}\nonumber\\
&=
|\N(n)|e^{-n\hat\vfi\nw(a)}.\label{type I upper bound}
\end{align}
The submultiplicativity of the trace-norm on tensor products yields that 
$\sup_{\rho\in\N}\Tr\rho^{\otimes n}(I-S_{n,a})
\le 
\sup_{\rho\in\N(n)}\Tr\rho^{\otimes n}(I-S_{n,a})+n\delta(\N(n)))$. 
Combined with \eqref{type I upper bound}, this yields \eqref{finiten bounds1}.

The inequalities in \eqref{NP upper1}--\eqref{NP upper2} are obvious from the choice of $\delta_n$.
\end{proof}

\begin{Lemma}\label{Lemma:positive rate}
We have $\vfi\nw(a)\ge a$, and for every $a<D_1(\N\|\sigma)$, we have $\hat\vfi\nw(a)>0$.
\end{Lemma}
\begin{proof}
Note that for any $t\in(0,1)$, $a(t-1)-\psi\nw(t)=(t-1)[a-\inf_{\rho\in\N}\rsrn{\rho}{\sigma}{t}]$. 
By Lemma \ref{Lemma:dist limit},
$\lim_{t\nearrow 1}\inf_{\rho\in\N}\rsrn{\rho}{\sigma}{t}=D_1(\N\|\sigma)$.
Thus, for any $a<D_1(\N\|\sigma)$, there exists a $t_a\in(0,1)$ such that 
$a-\inf_{\rho\in\N}\rsrn{\rho}{\sigma}{t_a}<0$, and hence
$0<(t_a-1)[a-\inf_{\rho\in\N}\rsrn{\rho}{\sigma}{t_a}]\le \hat\vfi\nw(a)$. 
Finally, note that assumption \eqref{support condition} yields that $\psi\nw(1)=0$, and hence
$\vfi\nw(a)\ge a-\psi\nw(1)=a$.
\end{proof}

\begin{thm}\label{thm:Stein2}
The direct rate is lower bounded by $D_1(\N\|\sigma)$, i.e.,
\begin{equation}\label{composite rate}
R(\{\N\notimes\}_{n\in\bN}\|\{\sigma^{\otimes n}\}_{n\in\bN})\ge D_1(\N\|\sigma).
\end{equation}
\end{thm}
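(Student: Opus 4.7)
The plan is to derive the lower bound by invoking Theorem \ref{thm:Stein} together with Lemma \ref{lemma:positive rate} with a carefully chosen pair of parameters $a$ and $\kappa$. Given any $R$ with $0 \le R < D_1(\N\|\sigma)$, I would first select some $a$ strictly between $R$ and $D_1(\N\|\sigma)$. Lemma \ref{lemma:positive rate} then immediately supplies $\hat\vfi(a) > 0$ and $\vfi(a) \ge a > R$, so both the type I exponent $\hat\vfi(a)$ and the gap $\vfi(a) - R$ in the type II exponent are strictly positive.

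Next I would take $\kappa > 0$ small enough that $\kappa D(\hil) < \min\{\hat\vfi(a),\,\vfi(a) - R\}$, where $D(\hil)$ denotes the real dimension of the ambient normed space of self-adjoint operators on $\hil$, and instantiate Theorem \ref{thm:Stein} with the net $\N(n) := \N_{\delta_n}$ from Lemma \ref{lemma:state approximation} for $\delta_n := e^{-n\kappa}$. The bound \eqref{NP upper1} then forces $\limsup_n \frac{1}{n}\log\alpha_n(S_{n,a}) < 0$, so in particular $\alpha_n(S_{n,a}) \to 0$, while \eqref{NP upper2} gives $\limsup_n \frac{1}{n}\log\beta_n(S_{n,a}) \le -(\vfi(a) - \kappa D(\hil)) < -R$. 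Hence the Neyman--Pearson sequence $\{S_{n,a}\}_{n\in\bN}$ witnesses achievability of $R$ in the sense of Definition \ref{def:direct rate}, and letting $R \nearrow D_1(\N\|\sigma)$ yields \eqref{composite rate}.

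I see no real obstacle beyond parameter bookkeeping: the substantive work has already been carried out, namely the derivation of the finite-$n$ error bounds in Theorem \ref{thm:Stein} (where the subadditivity of $Q_\alpha\nw$ from Proposition \ref{prop:complements} was essential for handling the sum $\sum_{\rho\in\N(n)}\rho^{\otimes n}$), and the positivity of $\hat\vfi(a)$ below $D_1(\N\|\sigma)$ in Lemma \ref{lemma:positive rate} (via the minimax theorem combined with monotonicity of $D_\alpha\nw$ in $\alpha$). The one mildly delicate point is that one cannot simply set $a = R$: a strict gap $a > R$ is needed so that the correction $\kappa D(\hil)$, which arises from the exponential growth of the net cardinality $|\N_{\delta_n}| \le (1+2e^{n\kappa})^{D(\hil)}$, can be absorbed while keeping the type II exponent strictly above $R$.
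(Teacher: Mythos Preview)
Your proposal is correct and follows essentially the same route as the paper: invoke Theorem \ref{thm:Stein} with $\N(n)=\N_{\delta_n}$ for $\delta_n=e^{-n\kappa}$, use Lemma \ref{lemma:positive rate} to ensure $\hat\vfi(a)>0$ and $\vfi(a)\ge a$, and then choose $\kappa$ small enough that both exponents in \eqref{NP upper1}--\eqref{NP upper2} are favorable. The only cosmetic difference is that the paper phrases the final step as taking $\kappa\to 0$ and $a\nearrow D_1(\N\|\sigma)$ directly (and notes separately that the case $D_1(\N\|\sigma)=0$ is trivial), whereas you interpose the target rate $R$ and let $R\nearrow D_1(\N\|\sigma)$; the content is the same.
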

\begin{proof}
The proposition is trivial when $D_1(\N\|\sigma)=0$, and hence for the rest we assume $D_1(\N\|\sigma)>0$. By Lemma \ref{Lemma:positive rate}, 
for every $0<a<D_1(\N\|\sigma)$ we can find 
$0<\kappa<\vfi\nw(a)/\Delta$, so that \eqref{NP upper1}--\eqref{NP upper2} hold. Since we can take $\kappa$ arbitrarily small, and $a$ arbitrarily close to $D_1(\N\|\sigma)$, we see that 
any rate below $\sup_{0<a<D_1(\N\|\sigma)}\vfi\nw(a)$ is achievable. By Lemma \ref{Lemma:positive rate}, 
$\sup_{0<a<D_1(\N\|\sigma)}\vfi\nw(a)\ge \sup_{0<a<D_1(\N\|\sigma)}a=D_1(\N\|\sigma)$, proving the assertion.
\end{proof}

The strong converse for the simple i.i.d.~case \cite{ON2} yields immediately the strong converse for the composite i.i.d.~case. 
We include a proof for completeness.
\begin{thm}\label{thm:sc}
If $\limsup_{n\to+\infty}\frac{1}{n}\log\Tr\sigma^{\otimes n}T_n\le-r$ for some sequence of tests
$T_n,\,n\in\bN$, then
\begin{align}\label{sc}
\limsup_{n\to+\infty}\frac{1}{n}\log\inf_{\rho\in\N}\Tr\rho^{\otimes n}T_n
\le 
\inf_{t>1}\frac{t-1}{t}\left[-r+\inf_{\rho\in\N}\rsrn{\rho}{\sigma}{t}\right].
\end{align}
If $r>D_1(\N\|\sigma)$ then the RHS of \eqref{sc} is strictly negative, and hence the worst-case success probability 
$\inf_{\rho\in\N}\Tr\rho^{\otimes n}T_n$ goes to zero exponentially fast. As a consequence, \eqref{composite rate} holds as an equality.
\end{thm}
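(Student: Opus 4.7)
The plan is to run the standard Nagaoka-style strong converse argument, but using the new R\'enyi divergence $D_t\nw$ (for $t>1$) in order to exploit its data-processing inequality and its additivity on tensor products, and then to turn the resulting worst-case success-probability bound into an upper bound on the direct rate.

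First, I would fix any $\rho\in\N$, any $t>1$, and apply the data-processing inequality for $D_t\nw$ to the two-outcome measurement channel $X\mapsto(\Tr X T_n,\Tr X(I-T_n))$, acting on $\rho^{\otimes n}$ and $\sigma^{\otimes n}$. Dropping the non-negative second term inside the logarithm (legal because $\tfrac{1}{t-1}>0$ for $t>1$) and using additivity $\rsrn{\rho^{\otimes n}}{\sigma^{\otimes n}}{t}=n\rsrn{\rho}{\sigma}{t}$ yields, after a short rearrangement,
\begin{equation*}
\frac{1}{n}\log\Tr\rho^{\otimes n}T_n
\le \frac{t-1}{t}\left[\rsrn{\rho}{\sigma}{t}+\frac{1}{n}\log\Tr\sigma^{\otimes n}T_n\right].
\end{equation*}
Taking $\inf_{\rho\in\N}$ on both sides (the RHS depends on $\rho$ only through $\rsrn{\rho}{\sigma}{t}$, and $\tfrac{t-1}{t}>0$) and then $\limsup_{n\to\infty}$, combined with the assumption $\limsup_n\tfrac{1}{n}\log\Tr\sigma^{\otimes n}T_n\le -r$, gives the desired inequality for each $t>1$; taking $\inf_{t>1}$ then yields \eqref{sc}.

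Next I would argue that $r>D_1(\N\|\sigma)$ forces the RHS of \eqref{sc} to be strictly negative. Since $\rsrn{\rho}{\sigma}{t}$ is monotone increasing in $t$ for each fixed $\rho$ (Theorem 6 of \cite{Renyi_new}, cited earlier in the excerpt), and since a double infimum can always be swapped,
\begin{equation*}
\inf_{t>1}\inf_{\rho\in\N}\rsrn{\rho}{\sigma}{t}
=\inf_{\rho\in\N}\inf_{t>1}\rsrn{\rho}{\sigma}{t}
=\inf_{\rho\in\N}\lim_{t\searrow 1}\rsrn{\rho}{\sigma}{t}
=\inf_{\rho\in\N}\rsr{\rho}{\sigma}{1}=D_1(\N\|\sigma),
\end{equation*}
where the middle identity uses Corollary \ref{cor:1 limit}. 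Hence if $r>D_1(\N\|\sigma)$ one can pick $t_0>1$ with $\inf_{\rho}\rsrn{\rho}{\sigma}{t_0}<r$, so the corresponding factor in \eqref{sc} is negative and $\inf_{\rho\in\N}\Tr\rho^{\otimes n}T_n$ decays exponentially.

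Finally, the ``as a consequence'' part follows by combining this with Theorem \ref{thm:Stein2}. If some rate $r>D_1(\N\|\sigma)$ were achievable in the sense of Definition \ref{def:direct rate}, then for the corresponding tests we would have $\alpha_n(T_n)\to 0$, i.e.\ $\inf_{\rho\in\N}\Tr\rho^{\otimes n}T_n=1-\sup_{\rho\in\N}\Tr\rho^{\otimes n}(I-T_n)\to 1$, contradicting the exponential decay just established; so $R(\{\N\notimes\}\|\{\sigma^{\otimes n}\})\le D_1(\N\|\sigma)$, which together with the reverse bound from Theorem \ref{thm:Stein2} gives equality in \eqref{composite rate}. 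The only slightly delicate step is the infimum-swap together with the limit $t\searrow 1$; everything else is the textbook Nagaoka manipulation, now made available for the composite i.i.d.\ setting by the additivity of $D_t\nw$ under tensor products and its uniformity over $\rho\in\N$.
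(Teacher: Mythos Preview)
Your proposal is correct and follows essentially the same route as the paper's proof: both apply the data-processing (monotonicity) inequality for $D_t\nw$ with $t>1$ to the binary measurement $(T_n,I-T_n)$, drop the second term, use additivity of $D_t\nw$ on tensor powers, take the infimum over $\rho$ and then the limsup in $n$, and finally swap $\inf_{t>1}\inf_{\rho}$ and invoke $\lim_{t\searrow 1}\rsrn{\rho}{\sigma}{t}=D_1(\rho\|\sigma)$ to get strict negativity when $r>D_1(\N\|\sigma)$. Your write-up is in fact slightly more explicit than the paper's (you spell out the additivity step and the ``as a consequence'' argument via Definition~\ref{def:direct rate} and Theorem~\ref{thm:Stein2}), but there is no substantive difference in method.
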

\begin{proof}
Following \cite{Nagaoka2} (see also \cite{MO}), we can use the monotonicity of the R\'enyi divergences 
under measurements for $\alpha>1$ \cite{FL,MO,Renyi_new,WWY} to obtain that for any sequence of tests $T_n,\,n\in\bN$,
any $\rho\in\N$, and any $t>1$,
\begin{align*}
&Q_t\nw(\rho^{\otimes n}\|\sigma^{\otimes n})\\
&\ge 
Q_t\nw\bz \left\{\Tr\rho^{\otimes n}T_n,\Tr\rho^{\otimes n}(I_n-T_n)\right\}\right.\|\\
&\s\ds\ds\ds\ds\left.
\left\{\Tr\sigma^{\otimes n}T_n,\Tr\sigma^{\otimes n}(I_n-T_n)\right\}\jz\\
&\ge 
\bz\Tr\rho^{\otimes n}T_n\jz^{t}\bz \Tr\sigma^{\otimes n}T_n\jz^{1-t},
\end{align*}
which yields
\begin{align*}
\frac{1}{n}\log\Tr\rho^{\otimes n}T_n
\le 
\frac{t-1}{t}\left[\frac{1}{n}\log\Tr\sigma^{\otimes n}T_n+\rsrn{\rho}{\sigma}{t}\right].
\end{align*}
Taking first the infimum in $\rho\in\N$, and then the limsup in $n$, we obtain \eqref{sc}.

Since $\inf_{t>1}\inf_{\rho\in\N}\rsrn{\rho}{\sigma}{t}=
\inf_{\rho\in\N}\inf_{t>1}\rsrn{\rho}{\sigma}{t}=D_1(\N\|\sigma)$, we see that if 
$r>D_1(\N\|\sigma)$ then there exists a $t>1$ such that $-r+\inf_{t>1}\inf_{\rho\in\N}\rsrn{\rho}{\sigma}{t}<0$, and hence the RHS of \eqref{sc} is strictly negative.
The rest of the statements follow immediately.
\end{proof}

\begin{rem}
Theorem \ref{thm:Stein2} shows the existence of a sequence of tests such that the type II error probability decays exponentially fast with rate $D_1(\N\|\sigma)$, while the
type I error probability goes to zero. Note that for this statement, it is enough to choose $\delta_n$ polynomially decaying; e.g.~$\delta_n:=1/n^2$ does the job, 
and we get an improved exponent for the type II error, $\limsup_{n\to+\infty}\frac{1}{n}\log\beta_n(S_{n,a})\le -\vfi\nw(a)$.

Theorem \ref{thm:Stein} yields more detailed information in the sense that it shows that 
for any rate $r$ below the optimal rate $D_1(\N\|\sigma)$, there exists a sequence of tests along which the type II error decays with the given rate $r$, while the type I error also 
decays exponentially fast; moreover, \eqref{NP upper1} provides a lower bound on the rate of the type I error. Note that if $\N$ is finite then the approximation process can be omitted, and we obtain the bounds 
\begin{align*}
\limsup_{n\to+\infty}\frac{1}{n}\log\alpha_n(S_{n,a})&\le -\hat\vfi\nw(a),\\
\limsup_{n\to+\infty}\frac{1}{n}\log\beta_n(S_{n,a})&\le -\vfi\nw(a).
\end{align*}
These bounds are not optimal; indeed, in the simple i.i.d.~case the quantum Hoeffding bound theorem \cite{ANSzV,Hayashi,HMO2,Nagaoka2} shows that the above 
inequalities become equalities with $\vfi\nw$ and $\hat\vfi\nw$ replaced with $\vfi\old(a):=\sup_{0<t\le 1}\{at-\log Q_t\old(\rho\|\sigma\},\,\hat\vfi\old(a):=\vfi\old(a)-a$,
and if $\rho$ and $\sigma$ don't commute then $\vfi\old(a)>\vfi\nw(a)$ and $\hat\vfi\old(a)>\hat\vfi\nw(a)$ for any $0<a<D_1(\rho\|\sigma)$, according to \cite{Hiai ALT}. On the other hand, the RHS of \eqref{sc} is known to give the exact strong converse exponent in the simple i.i.d.~case \cite{MO}.
\end{rem}
\medskip

The above arguments can also be used to obtain bounds on the direct rate in the case of states with arbitrary correlations. In this case, however, it may not be possible to find a 
suitable approximation procedure, and hence we restrict our attention to the case of finite null-hypothesis. 
Thus, for every $n\in\bN$, our alternative hypothesis $H_1$ is given by some state $\sigma_n\in\S(\hil_n)$, where $\hil_n$ is some finite-dimensional Hilbert space, and 
the null-hypothesis $H_0$ is given by
$\N_n=\{\rho_{1,n},\ldots,\rho_{r,n}\}\subset\S(\hil_n)$, where $r\in\bN$ is some fixed number.
We assume that $\supp\rho_{i,n}\subseteq\supp\sigma_n$ for every $i$ and $n$.

\begin{thm}\label{thm:correlated Stein}
In the above setting, we have
\begin{align}
&\limsup_{n\to+\infty}\frac{1}{n}\log\alpha_n(S_{n,a})\nonumber\\
&\le
-\sup_{0<t<1}\left\{a(t-1)-\max_{1\le i\le r}\limsup_{n\to+\infty}\frac{1}{n}\log Q_t\nw(\rho_{i,n}\|\sigma_n)\right\},\label{correlated Stein1}\\
&\limsup_{n\to+\infty}\frac{1}{n}\log\beta_n(S_{n,a})\nonumber\\
&\le
-\sup_{0<t<1}\left\{at-\max_{1\le i\le r}\limsup_{n\to+\infty}\frac{1}{n}\log Q_t\nw(\rho_{i,n}\|\sigma_n)\right\}
\nonumber\\
&\le -a,\label{correlated Stein3}
\end{align}
where $S_{n,a}:=\left\{e^{-na}\sum_i\rho_{i,n}-\sigma_n>0\right\}$. As a consequence, the direct rate is lower bounded as
\begin{align}\label{correlated Stein4}
R(\{\N_n\}_{n\in\bN}\|\{\sigma_n\}_{n\in\bN})&\ge
\sup_{0<t<1}\min_{1\le i\le r}\liminf_{n\to+\infty}\frac{1}{n} D_t\nw(\rho_{i,n}\|\sigma_n).
\end{align}
If $\limsup_{n\to+\infty}\frac{1}{n}\log\dim\hil_n<+\infty$ then we also have
\begin{align}\label{correlated Stein5}
R(\{\N_n\}_{n\in\bN}\|\{\sigma_n\}_{n\in\bN})&\ge
\min_i\derleft\psi_i\old(1),
\end{align}
where $\derleft$ stands for the left derivative, and $\psi_i\old(t):=\limsup_{n\to+\infty}\frac{1}{n}\log Q_t\old(\rho_{i,n}\|\sigma_n)$.
\end{thm}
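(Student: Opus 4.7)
The plan is to adapt the proof of Theorem~\ref{thm:Stein} to the non-i.i.d.\ setting, exploiting that the null-hypothesis has fixed finite cardinality $r$ so that no approximation of the first argument is needed. Set $\bar\rho_n := \sum_{i=1}^r \rho_{i,n}$. Applying Audenaert's inequality (Lemma~\ref{lemma:Aud}) to $A := e^{-na}\bar\rho_n$ and $B := \sigma_n$ gives, for every $t \in (0,1]$,
\begin{equation*}
e_n(a) := e^{-na}\Tr\bar\rho_n(I-S_{n,a})+\Tr\sigma_n S_{n,a}
\le e^{-nat}\Tr\bar\rho_n^t\sigma_n^{1-t}.
\end{equation*}
The first inequality in Lemma~\ref{lemma:old-new bounds} yields $\Tr\bar\rho_n^t\sigma_n^{1-t}\le Q_t\nw(\bar\rho_n\|\sigma_n)$, and the second inequality of \eqref{concavity complement} in Proposition~\ref{prop:complements} applied with all $\gamma_i=1$ gives $Q_t\nw(\bar\rho_n\|\sigma_n)\le\sum_{i=1}^r Q_t\nw(\rho_{i,n}\|\sigma_n)\le r\max_i Q_t\nw(\rho_{i,n}\|\sigma_n)$. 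Taking $\tfrac{1}{n}\log$, then $\limsup_n$, and finally $\inf_{0<t<1}$, together with the elementary bounds $\beta_n(S_{n,a})\le e_n(a)$ and $\alpha_n(S_{n,a})=\max_i\Tr\rho_{i,n}(I-S_{n,a})\le\Tr\bar\rho_n(I-S_{n,a})\le e^{na}e_n(a)$, yields \eqref{correlated Stein1} and the first inequality of \eqref{correlated Stein3}. The bound $\le -a$ in \eqref{correlated Stein3} follows by evaluating the supremum along $t\to 1^-$: for state pairs with $\supp\rho_{i,n}\subseteq\supp\sigma_n$ one has $D_t\nw(\rho_{i,n}\|\sigma_n)\ge 0$, which together with $1/(t-1)<0$ gives $\log Q_t\nw(\rho_{i,n}\|\sigma_n)\le 0$ for every $t\in(0,1)$.

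For \eqref{correlated Stein4}, set $d_*:=\sup_{0<t<1}\min_i\liminf_n\tfrac{1}{n}D_t\nw(\rho_{i,n}\|\sigma_n)$ and write $\psi_i\nw(t):=\limsup_n\tfrac{1}{n}\log Q_t\nw(\rho_{i,n}\|\sigma_n)$. The identities $\liminf_n\tfrac{1}{n}D_t\nw(\rho_{i,n}\|\sigma_n)=\psi_i\nw(t)/(t-1)$ and $\min_i[\psi_i\nw(t)/(t-1)]=[\max_i\psi_i\nw(t)]/(t-1)$ (both forced by the negative multiplier $1/(t-1)$) show that the type~I exponent in \eqref{correlated Stein1} is strictly positive exactly when $a<d_*$. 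For every such $a$ the tests $S_{n,a}$ drive $\alpha_n(S_{n,a})$ to zero exponentially, while the type~II rate is at least $a$ by the second inequality of \eqref{correlated Stein3}; hence $R\ge a$, and letting $a\nearrow d_*$ gives \eqref{correlated Stein4}.

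For \eqref{correlated Stein5}, let $L:=\limsup_n\tfrac{1}{n}\log\dim\hil_n<+\infty$. Corollary~\ref{cor:old-new bounds for states} gives $\tfrac{1}{n}D_t\nw(\rho_{i,n}\|\sigma_n)\ge\tfrac{t}{n}D_t\old(\rho_{i,n}\|\sigma_n)-(1-t)\tfrac{\log\dim\hil_n}{n}$ for $t\in(0,1)$; passing to $\liminf_n$, then $\min_i$, and using the analogous identity $\min_i\liminf_n\tfrac{1}{n}D_t\old(\rho_{i,n}\|\sigma_n)=[\max_i\psi_i\old(t)]/(t-1)$, we obtain
\begin{equation*}
d_*\ge\frac{t\,[\max_i\psi_i\old(t)]}{t-1}-(1-t)L,\qquad t\in(0,1).
\end{equation*}
The function $g(t):=\max_i\psi_i\old(t)$ is convex (each $\psi_i\old$ is convex as the pointwise limit of the decreasing sequence $\sup_{n\ge N}\tfrac{1}{n}\log Q_t\old(\rho_{i,n}\|\sigma_n)$ of convex functions in $t$, and finite maxima of convex functions are convex), satisfies $g(1)=0$, and is non-positive on $(0,1]$. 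The key calculus fact is that $\derleft g(1)=\min_i\derleft\psi_i\old(1)$: by the subdifferential formula for the maximum of convex functions at a point where they all take the common value $0$, $\partial g(1)=\mathrm{conv}\bigcup_i\partial\psi_i\old(1)$, whose left endpoint $\derleft g(1)$ equals $\min_i\derleft\psi_i\old(1)$. Letting $t\to 1^-$ in the displayed bound yields $d_*\ge\min_i\derleft\psi_i\old(1)$, so \eqref{correlated Stein5} follows from \eqref{correlated Stein4}. I expect the main obstacle to be this last identification of $\derleft g(1)$, together with the sign bookkeeping in the various $\min/\max$ swaps induced by the negative multipliers $1/(t-1)$ for $t\in(0,1)$.
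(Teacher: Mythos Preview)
Your proof is correct and follows essentially the same route as the paper: Audenaert's inequality applied to $e^{-na}\bar\rho_n$ and $\sigma_n$, the old--new bound from Lemma~\ref{lemma:old-new bounds}, and the subadditivity \eqref{concavity complement} to obtain \eqref{correlated Stein1}--\eqref{correlated Stein3}; then Corollary~\ref{cor:old-new bounds for states} to pass to $\psi_i\old$, and convexity together with $\psi_i\old(1)=0$ to recover the left derivative. The paper's proof is extremely terse (it just says ``the same argument as in Theorem~\ref{thm:Stein}'' for the first three displays and ``follows immediately'' for \eqref{correlated Stein4}), so your more explicit bookkeeping of the $\min/\max$ swaps under the negative factor $1/(t-1)$ is a genuine service.

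The one place where you take a slightly different path is the identification $\derleft g(1)=\min_i\derleft\psi_i\old(1)$ for $g=\max_i\psi_i\old$. Your subdifferential argument is correct, but it is heavier than necessary: since there are finitely many $i$ and each $\lim_{t\nearrow 1}\tfrac{\psi_i\old(t)}{t-1}=\derleft\psi_i\old(1)$ exists, the finite minimum commutes with the limit, giving $\lim_{t\nearrow 1}\tfrac{\max_i\psi_i\old(t)}{t-1}=\lim_{t\nearrow 1}\min_i\tfrac{\psi_i\old(t)}{t-1}=\min_i\derleft\psi_i\old(1)$ directly. This is what the paper does (implicitly), computing the limit per~$i$ and then taking $\min_i$. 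Either way works; the elementary route avoids invoking the structure of $\partial g(1)$.
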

\begin{proof}
The same argument as in Theorem \ref{thm:Stein} yields \eqref{correlated Stein1} and  \eqref{correlated Stein3}, from which \eqref{correlated Stein4}
follows immediately.
Assume now that $L:=\limsup_{n\to+\infty}\frac{1}{n}\log\dim\hil_n<+\infty$.
By Lemma \ref{Lemma:old-new}, we have
\begin{align}\label{correlated Stein2}
\limsup_{n\to+\infty}\frac{1}{n}\log Q_t\nw(\rho_{i,n}\|\sigma_n)\le t\psi_i\old(t)+(t-1)^2 L.
\end{align}
Note that $\psi_i\old(t)$ is the pointwise limsup of convex functions, and hence it is convex, too. Moreover, the support condition 
$\supp\rho_{i,n}\subseteq\supp\sigma_n$ implies $\psi_i\old(1)=0$. Hence, we have
$\lim_{t\nearrow 1}\frac{t}{t-1}\psi_i\old(t)=\derleft\psi_i\old(1)$. 
Combining this with \eqref{correlated Stein1} and  \eqref{correlated Stein2}, we see that $\limsup_{n\to+\infty}\frac{1}{n}\log\alpha_n(S_{n,a})<0$ for all
$a<\min_i\derleft\psi_i\old(1)$. Taking into account \eqref{correlated Stein3}, we get \eqref{correlated Stein5}.
\end{proof}

\begin{rem}\label{rem:correlated Stein}
Note that under suitable regularity, we have 
$\displaystyle{\derleft\psi_i\old(1)=\lim_{n\to+\infty}\frac{1}{n}\rsr{\rho_{i,n}}{\sigma_n}{1}}$, and hence
\begin{align}\label{correlated Stein7}
R(\{\N_n\}_{n\in\bN}\|\{\sigma_n\}_{n\in\bN})&\ge
\min_i\lim_{n\to+\infty}\frac{1}{n}\rsr{\rho_{i,n}}{\sigma_n}{1}.
\end{align}
This is clearly satisfied in the i.i.d.~case, and we recover \eqref{composite rate}. There are also various important special cases of correlated states where the above holds. 
This is the case, for instance, if all the $\rho_{i,n}$ and $\sigma_n$ are $n$-site restrictions of gauge-invariant quasi-free states on a fermionic or bosonic chain (the latter type of states are also called Gaussian states). In this case $\lim_{n\to+\infty}\frac{1}{n}\rsr{\rho_{i,n}}{\sigma_n}{1}$ can be expressed by an explicit formula in terms of the symbols of the states; see \cite{MHOF,M} for details. Another class of states where the above holds is when 
$\rho_{i,n}$ and $\sigma_n$ are group-invariant restrictions of i.i.d.~states on a spin chain \cite{HMH}. In this case one can use the same approximation procedure as in the i.i.d.~case,
and hence \eqref{correlated Stein7} holds for $\N_n:=\{\rho_{i,n}:\,i\in\I\}$, where $\I$ is an arbitrary (not necessearily finite) index set.
\end{rem}

\medskip

Finally, we show that the above considerations for the composite null-hypothesis yield the direct rate also for the \ki{averaged i.i.d.} case. In this setting
we have a probability measure $\mu$ on the Borel sets of $\S(\hil)$ such that 
$\bar\rho_n:=\int_{\S(\hil)}\rho^{\otimes n}\,d\mu$ is well-defined for every $n\in\bN$.
The null-hypothesis is given by the sequence $\N_n=\{\bar\rho_n\},\,n\in\bN$, and the alternative hypothesis is given by the sequence $\sigma^{\otimes n},\,n\in\bN$, as in the composite i.i.d.~case.
Note that in this case the null-hypotheses is simple, i.e., $\N_n$ consists of one single element, but it is not i.i.d.
Let 
\begin{align*}
D^*:=\sup\Big\{&\inf_{\rho\in\S(\hil)\setminus H}\rsr{\rho}{\sigma}{1}: \\
&\,H\subset\S(\hil)
\text{ Borel set with }\mu(H)=0 \Big\},
\end{align*}
which is essentially the relative entropy distance of $\supp\mu$ from $\sigma$, modulo subsets of zero measure.
Assume that $D^*>0$, since otherwise \eqref{averaged rate} holds trivially.
For every $0<a<D^*$, there exists a subset $\N(a)$ such that $a<\rsr{\N(a)}{\sigma}{1}\le D^*$ and 
$\mu(\S(\hil)\setminus \N(a))=0$. Applying Theorem \ref{thm:Stein} to 
the composite i.i.d.~problem with null-hypothesis $\N(a)$, we get the existence of a sequence of tests $T_n,\,n\in\bN$, such that 
\begin{align*}
&\limsup_{n\to+\infty}\frac{1}{n}\log\Tr\sigma^{\otimes n}T_n\le -a,\\
&\limsup_{n\to+\infty}\frac{1}{n}\log\Tr\bar\rho_n(I-T_n)\\
&\le 
\limsup_{n\to+\infty}\frac{1}{n}\log\sup_{\rho\in\N(a)}\Tr\rho^{\otimes n}(I-T_n)<0.
\end{align*}
Hence, the direct rate for the averaged i.i.d.~problem is lower bounded by $D^*$, i.e.,
\begin{equation}\label{averaged rate}
R(\{\bar\rho_n\}_{n\in\bN}\|\{\sigma^{\otimes n}\}_{n\in\bN})\ge D^*.
\end{equation}

\subsection{Universal state compression}
\label{sec:comp}

Consider a sequence of finite-dimensional Hilbert spaces $\hil_n,\,n\in\bN$, and for each $n$, let $\N_n\subset\S(\hil_n)$ be a set of states. 
An \ki{asymptotic compression scheme} is a sequence $(\C_n,\D_n),\,n\in\bN$, where $\C_n:\,\B(\hil^{\otimes n})\to\B(\kil_n)$ is the compression map, and 
$\D_n:\,\B(\kil_n)\to\B(\hil^{\otimes n})$ is the decompression. We use two different measures for the fidelity of $(\C_n,\D_n)$, defined as 
\begin{align*}
F(\C_n,\D_n)&:=\inf_{\rho_n\in\N_n}F_e(\rho_n,\D_n\circ\C_n),\\
\hat F(\C_n,\D_n)&:=\inf_{\rho_n\in\N_n}F(\rho_n,(\D_n\circ\C_n)\rho_n),
\end{align*}
where $F$ stands for the fidelity, and $F_e$ for the the entanglement fidelity (see Section \ref{sec:notations}).
Due to the monotonicity of the fidelity under partial trace, we have $F(\C_n,\D_n)\le\hat F(\C_n,\D_n)$.
Let $\left[\C_n(\N_n)\right]$ be the projection onto the subspace generated by the supports of $\C_n(\rho_n),\,\rho_n\in\N_n$.
We say that a compression rate $R$ is achievable if there exists 
an asymptotic compression scheme $(\C_n,\D_n),\,n\in\bN$, such that 
\begin{align*}
\lim_{n\to+\infty}F(\C_n,\D_n)&=1,\ds\ds\ds\text{and}\\
\limsup_{n\to+\infty}\frac{1}{n}\log\Tr \left[\C_n(\N_n)\right]&\le R. 
\end{align*}
The smallest achievable compression rate is the \ki{optimal compression rate} $R(\{\N_n\}_{n\in\bN})$.

We say that the compression problem is i.i.d.~if $\hil_n=\hil^{\otimes n}$ and $\N_n=\N^{(\otimes n)}:=\{\rho^{\otimes n}:\,\rho\in\N\}$ for every $n\in\bN$, 
where $\hil=\hil_1$, and $\N\subset\S(\hil)$.
It was shown in \cite{Schumacher} (see also \cite{JS}) that in the simple i.i.d.~case, projecting the state onto its entropy-typical subspace yields the entropy as an 
achievable coding rate, which 
is also optimal. In Section 10.3 of \cite{Hayashibook}, Neyman-Pearson type projections were used instead of the typical projections, and exponential bounds were obtained for 
the error probability 
for suboptimal coding rates. An extension of the typical projection technique was used in \cite{JHHH} to obtain universal state compression, i.e., it was shown that 
for any $s>0$, there exists a coding scheme of rate $s$ that is asymptotically error-free for any state of entropy less than $s$. Theorem \ref{thm:universal compression} below shows that 
the use of Neyman-Pearson projections can also be extended to obtain universal state compression. Since Theorem \ref{thm:universal compression} is essentially a special case of Theorems
\ref{thm:Stein} and \ref{thm:sc} with the choice $\sigma:=I$, we omit the proof. The only part that does not follow immediately from Theorems
\ref{thm:Stein} and \ref{thm:sc} is relating the fidelity to the success probability of the generalized state discrimination problem; this, however, is standard and we refer the interested reader to Section 12.2.2 in \cite{NC}.
\smallskip 

 Let $\psi\old(t)=\psi\nw(t)$, $\vfi\old(a)=\vfi\nw(a)$ and $\hat\vfi\old(a)=\hat\vfi\nw(a)$ be defined as in \eqref{psi}--\eqref{phi}, with $\sigma:=I$. The above equalities hold because $\rho$ and $\sigma=I$ commute for any $\rho$, and hence
$Q_t\nw(\rho\|\sigma)=Q_t\old(\rho\|\sigma)=\Tr\rho^t$.

\begin{thm}\label{thm:universal compression}
In the i.i.d.~case, for every $\kappa>0$, $a\in\bR$, and $n\in\bN$, let $\delta_n:=e^{-n\kappa}$, let 
$\N_{\delta_n}\subset\N_n$ be a subset as in Lemma \ref{Lemma:state approximation}, and let $S_{n,a}:=\left\{e^{-na}\sum_{\rho\in\N_{\delta_n}}\rho^{\otimes n}-I_n>0\right\}$. Define
\begin{align*}
\C_n(.)&:=S_{n,a}(.)S_{n,a}+\pr{x_n}\Tr(.)(I-S_{n,a}),\\
\D_n&:=\id,
\end{align*}
where $x_n$ is an arbitrary unit vector in the range of $S_{n,a}$.
For every $a\in\bR$ and $\kappa>0$, we have
\begin{align}
\limsup_{n\to+\infty}\frac{1}{n}\log(1-F(\C_n,\D_n))&\le -\min\{\kappa,\hat\vfi(a)-\kappa \Delta\},\\
\limsup_{n\to+\infty}\frac{1}{n}\log\Tr \left[\C_n(\N_n)\right]&\le -\vfi(a)+\kappa \Delta.
\end{align}

On the other hand, for any coding scheme $(\C_n,\D_n),\,n\in\bN$, we have 
\begin{align*}
&\limsup_{n\to+\infty}\frac{1}{n}\log \hat F(\C_n,\D_n)\\
&\le
\inf_{t>1}\frac{t-1}{t}\left[\limsup_{n\to+\infty}\frac{1}{n}\log\Tr \left[\C_n(\N_n)\right]-\sup_{\rho\in\N}S_t(\rho)\right].
\end{align*}
where $S_t(\rho):=\frac{1}{1-t}\log\Tr\rho^t$ is the R\'enyi entropy of $\rho$ with parameter $t$.
\end{thm}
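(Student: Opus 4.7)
The plan is to derive both parts of Theorem \ref{thm:universal compression} as specializations of Theorems \ref{thm:Stein} and \ref{thm:sc} to the choice $\sigma := I$, so that the alternative hypothesis becomes the identity and the ``type-II error'' $\Tr\sigma^{\otimes n}T_n = \Tr T_n$ directly controls the dimension of the code subspace. Beyond these two theorems, the only additional ingredient is the standard correspondence between compression fidelity and state-discrimination success probability.

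For the achievability part, since $\psi_n\in\ran S_{n,a}$ each output $\C_n(\rho_n)$ is supported in $\ran S_{n,a}$, so $[\C_n(\N_n)]\le S_{n,a}$ and the $\sigma=I$ specialization of \eqref{NP upper2} of Theorem \ref{thm:Stein} yields the second claim. For the fidelity bound I would exploit the explicit Kraus decomposition of $\C_n$, formed by $S_{n,a}$ together with $\{|\psi_n\rangle\langle\phi_k|\}_k$ for any orthonormal basis $\{\phi_k\}_k$ of $\ran(I-S_{n,a})$. Using $\langle\psi_{\rho^{\otimes n}}|(\id\otimes\C_n)(\pr{\psi_{\rho^{\otimes n}}})|\psi_{\rho^{\otimes n}}\rangle=\sum_i|\Tr(\rho^{\otimes n}K_i)|^2$ and keeping only the term coming from $S_{n,a}$, one obtains $F_e(\rho^{\otimes n},\C_n)\ge \Tr\rho^{\otimes n}S_{n,a}$ (recall that the paper's $F_e$ is the unsquared fidelity), whence $1-F_e(\rho^{\otimes n},\C_n)\le\Tr\rho^{\otimes n}(I-S_{n,a})\le\alpha_n(S_{n,a})$. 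The type-I bound \eqref{NP upper1} of Theorem \ref{thm:Stein}, applied to the net $\N_{\delta_n}$ with $\delta_n=e^{-n\kappa}$, then gives the first claim.

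For the converse, I would convert any coding scheme into a binary test $T_n$ on $\hil^{\otimes n}$ satisfying $\Tr T_n\le\Tr[\C_n(\N_n)]$ and $\Tr\rho^{\otimes n}T_n$ lower-bounding a fixed power of $\hat F(\C_n,\D_n)$ uniformly over $\rho\in\N$. A natural candidate is the projection onto the support of $(\D_n\circ\C_n)$ restricted to the range of $[\C_n(\N_n)]$; the rank-Cauchy--Schwarz estimate $F(\rho,\tau)^2\le(\rank\tau)\Tr(\rho P_\tau)$ applied to $\tau=(\D_n\circ\C_n)(\rho^{\otimes n})$ then relates the square of the fidelity to $\Tr\rho^{\otimes n}T_n$. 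Feeding this test into Theorem \ref{thm:sc} with $\sigma=I$, and using the identity $D_t\nw(\rho\|I)=-S_t(\rho)$, yields the claimed inequality after taking the infimum over $t>1$. The main obstacle is pinning down the precise test and making sure that the coefficients in the fidelity-to-probability conversion combine cleanly to give the stated bound without extraneous factors; this is a well-known manipulation carried out in detail in \cite[Section 12.2.2]{NC}, which is why the authors omit it here.
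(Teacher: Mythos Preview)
Your proposal is correct and matches the paper's approach exactly: both identify the theorem as the $\sigma=I$ specialization of Theorems~\ref{thm:Stein} and~\ref{thm:sc}, with the only non-immediate step being the fidelity-to-success-probability conversion, which is deferred to \cite[Section~12.2.2]{NC}. The explicit details you supply for the achievability direction (the Kraus decomposition of $\C_n$ giving $1-F_e(\rho^{\otimes n},\C_n)\le\Tr\rho^{\otimes n}(I-S_{n,a})$, and $[\C_n(\N_n)]\le S_{n,a}$) are correct, and your acknowledgment that pinning down the test in the converse is the delicate point mirrors the paper's own deferral.
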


\begin{cor}\label{cor:universal compression}
The optimal compression rate is equal to the maximum entropy, i.e.,
\begin{align*}
R(\{\N^{(\otimes n)}_{n\in\bN}\})=\sup_{\rho\in\N}S(\rho).
\end{align*}
\end{cor}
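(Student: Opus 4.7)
The plan is to derive both $R(\{\N^{[\otimes n]}\}) \le S^*$ and $R(\{\N^{[\otimes n]}\}) \ge S^*$ (with $S^* := \sup_{\rho\in\N}S(\rho)$) directly from the two halves of Theorem~\ref{thm:universal compression}, specialized to $\sigma=I$. In this setting $Q_t\nw(\rho\|I) = \Tr\rho^t = e^{(1-t)S_t(\rho)}$, so $\psi(t) = (1-t)\sup_{\rho\in\N}S_t(\rho)$ for $t\in(0,1]$ with $\psi(1)=0$, and monotonicity of the R\'enyi entropy in $t$ gives $\sup_{\rho\in\N} S_t(\rho) \ge S^*$ throughout $(0,1]$.

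For the upper bound, fix any $s > S^*$. The key step is verifying $\vfi(-S^*) = -S^*$: at $t=1$ the expression $-S^* t - \psi(t)$ equals $-S^*$, and for $t\in(0,1)$ the inequality $-S^* t - \psi(t) \le -S^*$ reduces via $\psi(t) = (1-t)\sup_\rho S_t(\rho)$ to $S^* \le \sup_\rho S_t(\rho)$, which we just noted. For $a$ slightly below $-S^*$, $\vfi$ is convex and everywhere finite, hence continuous, so $-\vfi(a)$ remains close to $S^*$; moreover the left $t$-derivative of $at - \psi(t)$ at $t=1$ equals $a + S^* < 0$, moving the supremum off the boundary $t=1$ and forcing $\hat\vfi(a) = \vfi(a) - a > 0$. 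Choosing $\kappa>0$ small enough that $\kappa D(\hil) < \hat\vfi(a)$ and $-\vfi(a) + \kappa D(\hil) < s$, the two bounds of Theorem~\ref{thm:universal compression} deliver a scheme with $1-F \to 0$ exponentially and compression rate below $s$, proving $s$ is achievable.

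For the converse bound, suppose some rate $R < S^*$ were achievable by a scheme $(\C_n,\D_n)_{n\in\bN}$. Pick $\rho_0 \in \N$ with $S(\rho_0) > R$; continuity of $t \mapsto S_t(\rho_0)$ at $t=1$ yields some $t_0 > 1$ with $S_{t_0}(\rho_0) > R$, and hence $\sup_{\rho\in\N} S_{t_0}(\rho) > R$. The converse half of Theorem~\ref{thm:universal compression} then gives $\limsup_n \frac{1}{n}\log\hat F(\C_n,\D_n) \le \frac{t_0-1}{t_0}(R - \sup_{\rho\in\N} S_{t_0}(\rho)) < 0$, so $\hat F \to 0$ exponentially; since $F \le \hat F$, this contradicts the achievability requirement $F \to 1$. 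The main subtlety will be the achievability step, where one must simultaneously control $\hat\vfi(a)$ from below and $-\vfi(a)$ from above as $a \to -S^{*-}$; this can be done generically by exploiting some $t^* \in (0,1)$ with $\sup_\rho S_{t^*}(\rho) > S^*$ to give the explicit lower bound $\vfi(a) \ge at^* - \psi(t^*)$ with a uniform positive gap, while in the degenerate case where $\psi$ is linear on $(0,1]$ one checks by direct computation that $\vfi(a) = -S^*$ for all $a \le -S^*$ so $\hat\vfi(a) = -S^* - a$ can be made as large as needed.
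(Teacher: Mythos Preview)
Your approach---deriving both inequalities from Theorem~\ref{thm:universal compression}---is exactly what the paper intends (the corollary is stated without proof), and your converse half is clean.

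In the achievability half there is a gap. You assert that the left $t$-derivative of $at-\psi(t)$ at $t=1$ equals $a+S^*$, but this amounts to claiming $\lim_{t\to 1^-}\sup_{\rho\in\N}S_t(\rho)=S^*$. Only the inequality $\ge$ follows from monotonicity of $S_t$ in $t$; the reverse direction requires passing to the compact closure $\overline{\N}$ and a continuity/minimax argument. This is precisely the content of Lemma~\ref{lemma:positive rate} specialized to $\sigma=I$ (where $D_1(\N\|I)=-S^*$), which directly yields $\hat\vfi(a)>0$ for all $a<-S^*$. Your final-paragraph case split does not repair this: in your ``generic'' case where $\sup_\rho S_{t^*}(\rho)>S^*$ for some $t^*\in(0,1)$, the bound $\hat\vfi(a)\ge a(t^*-1)-\psi(t^*)=(1-t^*)\bigl(-a-\sup_\rho S_{t^*}(\rho)\bigr)$ is positive only when $a<-\sup_\rho S_{t^*}(\rho)$, which is a \emph{stronger} restriction on $a$ than $a<-S^*$, so it fails precisely for the values of $a$ just below $-S^*$ that you need. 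Simply invoking Lemma~\ref{lemma:positive rate} in place of your derivative computation fixes the argument: then $\hat\vfi(a)>0$ for every $a<-S^*$, and continuity of $\vfi$ (which you correctly justify) lets you choose $a$ close enough to $-S^*$ and $\kappa$ small enough that $-\vfi(a)+\kappa D(\hil)<s$.
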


\begin{rem}
We recover the result of \cite{JHHH} by choosing $\N:=\{\rho\in\S(\hil):\,S(\rho)\le s\}$.
\end{rem}

\begin{rem}
Theorem \ref{thm:universal compression} and Corollary \ref{cor:universal compression} can be extended to correlated states and averaged states the same way as the analogous 
results for state discrimination in Section \ref{sec:Stein}. Since these extensions are trivial, we omit the details.
\end{rem}

\begin{rem}
The simple i.i.d.~state compression problem can also be formulated in an ensemble setting, which is in closer resemblance with the usual formulation of classical 
source coding. In that formulation,
a discrete i.i.d.~quantum information source is specified by a finite set $\{\rho_x\}_{x\in\X}\subset\S(\hil)$ of states and a probability distribution $p$ on $\X$. Invoking the source 
$n$ times, we obtain a state $\rho_{\vecc{x}}:=\rho_{x_1}\otimes\ldots\otimes\rho_{x_n}$ with probability $p(\vecc{x}):=p(x_1)\cdot\ldots\cdot p(x_n)$. 
The fidelity of a compression-decompression pair $(\C_n,\D_n)$ is then defined as
$F(\C_n,\D_n):=\sum_{x\in\X}p(x)F_e\bz\rho_x,\D_n\circ\C_n\jz$.
In the classical case the signals $\rho_x$ can be identified with a system of orthogonal rank $1$ projections,
$\C_n$ and $\D_n$ are classical stochastic maps, and $F(\C_n,\D_n)$ as defined above gives back the usual expression for the success probability.
It follows from standard properties of the fildelity that the optimal compression rate, under the constraint that 
$F(\C_n,\D_n)$ goes to $1$ asymptotically, only depends on the average state 
$\rho(p):=\sum_xp(x)\rho_x$, and is equal to $S(\rho(p))$. Theorem \ref{thm:universal compression} and Corollary \ref{cor:universal compression}
thus also provide the optimal compression rate and exponential bounds on the error and success probabilities in the ensemble formulation, for
multiple quantum sources.
\end{rem}

\subsection{Classical capacity of compound channels}
\label{sec:capacity}

Recall that by a channel $W$ we mean a map $W:\,\X\to\S(\hil)$, where $\X$ is some input alphabet (which can be an arbitrary non-empty set) and $\hil$ is a finite-dimensional Hilbert space. 
For a channel $W:\,\X\to\S(\hil)$, we define its \ki{$n$-th i.i.d.~extension} $W^{\otimes n}$ 
as the channel $W^{\otimes n}:\,\X^n\to\S(\hil^{\otimes n})$, defined as
\begin{equation}
W^{\otimes n}(x_1,\ldots,x_n):=W(x_1)\otimes\ldots \otimes W(x_n),
\end{equation}
$x_1,\ldots,x_n\in\X$.

It is obvious from the explicit formula \eqref{explicit alpha-Holevo} for $\chi_{\alpha}\old(W,p)$ that
\begin{equation}\label{chi additivity}
\chi_{\alpha}\old(W^{\otimes n},p^{\otimes n})=n\chi_{\alpha}\old(W,p),\ds\ds\ds n\in\bN,
\end{equation}
where $p^{\otimes n}\in\P_f(\X^n)$ is the $n$-th i.i.d.~extension of $p$, defined as
$p^{\otimes n}(x_1,\ldots,x_n):=p(x_1)\cdot\ldots\cdot p(x_n)$, $x_1,\ldots,x_n\in\X$.
It follows from \cite[Theorem 11]{Beigi} that the same additivity property holds for $\hol{\alpha}\nw$, i.e.,
\begin{equation}\label{chi additivity2}
\chi_{\alpha}\nw(W^{\otimes n},p^{\otimes n})=n\chi_{\alpha}\nw(W,p),\ds\ds\ds n\in\bN.
\end{equation}
Note, however, that while the proof of \eqref{chi additivity} is almost trivial, the proof of 
\eqref{chi additivity2} is mathematically very involved.

\begin{rem}
Note that in our definition of a channel, we didn't make any assumption on the cardinality of the input alphabet $\X$, nor did we require any further mathematical properties from $W$, apart from being a function to $\S(\hil)$. 
The usual notion of a quantum channel is a special case of this definition, where $\X$ is the state space of some Hilbert space and $W$ is a completely positive trace-preserving convex map. In this case, however, our definition of the 
i.i.d.~extensions are more restrictive 
than the usual definition of the tensor powers of a quantum channel. Indeed, our definition corresponds to the notion of quantum channels with product state encoding. Hence, our definition of the classical capacity below
corresponds to the classical capacity of quantum channels with product state encoding. 
\end{rem}

Let $W_i:\,\X\to\S(\hil),\,i\in\I$, be a set of channels with the same input alphabet $\X$ and the same output Hilbert space $\hil$, where $\I$ is any index set.
A \ki{code} $\C=(\C_e,\C_d)$ for $\{W_i\}_{i\in\I}$ consists of an encoding
$\C_e:\{1,\ldots,M\}\to\X$ and a decoding $\C_d:\{1,\ldots,M\}\to\B(\hil)_+$, where
$\{\C_d(1),\ldots,\C_d(M)\}$ is a POVM 
on $\hil$, and $M\in\bN$ is the size of the code, which we will denote by 
$|\C|$. The elements of $\mathrm{ran}\,\C_e$ are called the \ki{codewords} of $\C$.
The worst-case average error probability of a code $\C$ is
\begin{align*}
p_e\bz\{W_i\}_{i\in\I},\C\jz&:=
\sup_{i\in\I}\frac{1}{|\C|}\sum_{k=1}^{|\C|}\Tr W_i(\C_e(k))(I-\C_d(k)).
\end{align*}
When the set $\{W_i\}_{i\in\I}$ contains only one single channel $W$, we will use the simpler notation $p_e(W,\C)$ for the error probability.

Consider now a sequence $\W:=\{\W_n\}_{n\in\bN}$, where each $\W_n$ is a set of channels with input alphabet $\X^n$ and output space $\hil^{\otimes n}$. 
The \ki{classical capacity} $C(\W)$ of $\W$ is the largest number $R$ such that there exists a sequence of codes $C^{(n)}=\bz C^{(n)}_e,C^{(n)}_d\jz$ with
\begin{align*}
\lim_{n\to+\infty}p_e(\W_n,\C_n)=0\ds\ds\ds\text{and}\ds\ds\ds
\liminf_{n\to+\infty}\frac{1}{n}\log|\C_n|\ge R.
\end{align*}

We say that $\W$ is simple i.i.d.~if $\W_n$ consists of one single element $W^{\otimes n}$ for every $n\in\bN$ with some fixed channel $W$. 
In this case we denote the capacity by $C(W)$.
The Holevo-Schumacher-Westmoreland theorem \cite{Holevo,SW} tells that in this case
\begin{equation}\label{HSW}
C(W)\ge\chi(W)=\sup_{p\in\P_f(\X)}\chi(W,p),
\end{equation}
where $\chi(W,p)$ is the Holevo quantity \eqref{Hol def}, and $\chi(W)$ is the Holevo capacity \eqref{Holevo cap} of the channel.
It is easy to see that 
\eqref{HSW} actually holds as an equality, i.e., no sequence of codes with a rate above
$\sup_{p\in\P_f(\X)}\chi(W,p)$ can have an asymptotic error equal to zero;
this is called the weak converse to the channel coding theorem, while the strong converse theorem 
\cite{ON,Winter} says that such sequences of codes always have an asymptotic error equal to $1$.

Here we will consider two generalizations of the simple i.i.d.~case:
In the \ki{compound i.i.d.} case $\W_n=\{W_i^{\otimes n}\}_{i\in\I}$ for some fixed channels
$W_i:\,\X\to\S(\hil)$. In the \ki{averaged i.i.d.} case
$\W_n$ consists of the single element $\ol W_n:=\sum_{i\in\I}\gamma_iW_i^{\otimes n}$, where
$\I$ is finite, and
$\gamma$ is a probability distribution on $\I$.
The capacity of finite averaged channels has been shown to be equal to 
$\sup_{p\in\P_f(\X)}\min_i\chi(W_i,p)$ in \cite{DD}, 
and the same formula for the capacity of a finite compound channel follows from it in a straightforward way.
The protocol used in \cite{DD} to show the achievability was to use a certain fraction of the communication rounds to guess which 
channel the parties are actually using, and then code for that channel in the remaining rounds.
These results were generalized to arbitray index sets $\I$
in \cite{BB}, using a different approach. The starting point in \cite{BB} was the following random coding theorem from \cite{HN}
(for the exact form below, see \cite{MD}).

\begin{Lemma}\label{thm:error bound}
Let $W:\,\X\to\S(\hil)$ be a channel.
For any $M\in\bN$, and any $p\in\P_f(\X)$, there exists a code $\C$ with codewords in $\supp p$ such that $|\C|=M$ and
\begin{equation*}
p_{e}(W,\C)\le\kappa(c,\alpha) M^{1-\alpha}\Tr \what W(p)^{\alpha}(\hat p\otimes W(p))^{1-\alpha}
\end{equation*}
for every $\alpha\in(0,1)$ and every $c>0$, 
where $\kappa(c,\alpha):=(1+c)^{\alpha}(2+c+1/c)^{1-\alpha}$.
\end{Lemma}

Applying the general properties of the R\'enyi divergences, established in Section \ref{sec:Renyi}, together 
with the single-shot coding theorem of Lemma \ref{thm:error bound}, we get a very simple proof of the achievability part 
of the coding theorems in \cite{DD} and \cite{BB}. Since our primary interest is the applicability of the 
new R\'enyi divergences $D_{\alpha}\nw$ to achievability proofs, we will not consider the converse
parts.
%
The key step of our approach is the following extension of Lemma \ref{thm:error bound} to multiple channels.

\begin{Lemma}\label{thm:capacity}
Let $W_i:\,\X\to\S(\hil),\,i\in \I$, be a set of channels, where $\I$ is a finite index set.
For every $R\ge 0$, every $n\in\bN$, and every $p\in\P_f(\X)$, there exists a code
$\C_n$ with codewords in $\supp p^{\otimes n}$, such that for every $\alpha\in(0,1)$,
\begin{align}
&|\C_n|\ge \exp(nR),\ds\ds\ds\ds\ds\ds\text{and}\nonumber\\
&p_e\bz\{W_i^{\otimes n}\}_{i\in\I},\C_n\jz\nonumber\\
&\ds\le
8|\I|^2\exp\Big[ n(\alpha-1)\Big(\alpha\min_i\chi_{\alpha}\old(W_i,p)-R \nonumber\\
&\ds\ds\ds\ds\ds\ds\ds\ds\ds\ds\ds\ds\ds\ds\ds\ds-(\alpha-1)\log\dim(\hil)\Big)\Big].
\label{channel error bound}
\end{align}
\end{Lemma}
\begin{proof}
Let $M_n:=\lceil \exp(nR)\rceil,\,n\in\bN$ and $\gamma_i:=1/|\I|,\,i\in\I$.
Applying Lemma \ref{thm:error bound} to 
$\ol W_n=\sum_{i\in\I}\gamma_iW_i^{\otimes n}$, $M_n$ and $p^{\otimes n}$, we get the existence 
of a code $\C_n$ with codewords in $\supp p^{\otimes n}$ and $|\C_n|=M_n$, such that
\begin{align}
&p_{e}(\ol W_n,\C_n)\le \nonumber\\
&\ds8M_n^{1-\alpha}
Q_{\alpha}\old\bz \sum_{i\in\I}\gamma_i\what W_i^{\otimes n}(p^{\otimes n})\Big\|\hat p^{\otimes n}\otimes \ol  W_n(p^{\otimes n})\jz\label{n-shot bound}
\end{align}
for every $\alpha\in(0,1)$. Here we chose $c=1$, and used the upper bound $\kappa(1,\alpha)\le 8$.
We can further upper bound the RHS above as
\begin{align}
&Q_{\alpha}\old\bz \sum_{i\in\I}\gamma_i\what W_i^{\otimes n}(p^{\otimes n})\Big\|\hat p^{\otimes n}\otimes \ol  W_n(p^{\otimes n})\jz\nonumber\\
&\ds\le
Q_{\alpha}\nw\bz \sum_{i\in\I}\gamma_i\what W_i^{\otimes n}(p^{\otimes n})
 \Big\|\hat p^{\otimes n}\otimes \ol  W_n(p^{\otimes n})\jz\label{ineqq1}\\
&\ds\le
\sum_{i\in\I}\gamma_i^{\alpha}
Q_{\alpha}\nw\bz \what W_i^{\otimes n}(p^{\otimes n})\big\|\hat p^{\otimes n}\otimes \ol  W_n(p^{\otimes n})\jz\label{ineqq2}\\
&\ds\le
\sum_{i\in\I}\gamma_i^{\alpha}\sup_{\sigma\in\S(\hil^{\otimes n})}
Q_{\alpha}\nw\bz \what W_i^{\otimes n}(p^{\otimes n})\big\|\hat p^{\otimes n}\otimes \sigma\jz
\label{ineqq3}\\
&\ds\le
\sum_{i\in\I}\gamma_i^{\alpha}\sup_{\sigma\in\S(\hil^{\otimes n})}
Q_{\alpha}\old\bz \what W_i^{\otimes n}(p^{\otimes n})\big\|\hat p^{\otimes n}\otimes \sigma\jz^{\alpha}\nonumber\\
&\ds\ds\ds\ds\ds\ds\ds \cdot(\dim\hil^{\otimes n})^{(\alpha-1)^2}\label{ineqq4}\\
&\ds=
\sum_{i\in\I}\gamma_i^{\alpha}\exp\bz \alpha(\alpha-1)\chi_{\alpha}\old(W_i^{\otimes n},p^{\otimes n}\jz(\dim\hil)^{n(\alpha-1)^2}\label{ineqq5}\\
&\ds=
\sum_{i\in\I}\gamma_i^{\alpha}\exp\bz n\alpha(\alpha-1)\chi_{\alpha}\old(W_i,p)\jz(\dim\hil)^{n(\alpha-1)^2}
\label{ineqq6},\\
&\ds\le
|\I|\exp\bz n\alpha(\alpha-1)\min_{i\in\I}\chi_{\alpha}\old(W_i,p)\jz(\dim\hil)^{n(\alpha-1)^2}
\label{ineqq7}
\end{align}
where \eqref{ineqq1} is due to the first inequality in \eqref{old-new bounds}, \eqref{ineqq2} is due to 
the second inequality in \eqref{concavity complement}, \eqref{ineqq3} is trivial, 
\eqref{ineqq4} follows from \eqref{old-new bounds10},
and \eqref{ineqq6} is due to \eqref{chi additivity}.
Note that 
\begin{align}
p_{e}(\ol W_n,\C_n)&=
\frac{1}{|\I|}\sum_{i\in\I}p_e(W_i^{\otimes n},\C_n)
\ge
\frac{1}{|\I|}\sup_{i\in\I}p_e(W_i^{\otimes n},\C_n).\label{ineqq8}
\end{align}
Combining \eqref{n-shot bound}, \eqref{ineqq7}, and \eqref{ineqq8}, we get
\eqref{channel error bound}.
\end{proof}

\begin{rem}
We could have chosen a slightly different path above, and instead of switching back to the $Q_{\alpha}$ quantities in 
\eqref{ineqq4}, use directly the additivity \eqref{chi additivity2} of $\hol{\alpha}\nw$ to obtain a bound similar to the one in 
\eqref{ineqq6}, but in terms of the $\hol{\alpha}\nw$ quantities. Since the $\hol{\alpha}\nw$ quantities also yield the Holevo quantity in the limit $\alpha\to 1$, this bound would be equally useful for Theorem \ref{thm:compound cap}. The reason 
that we followed the above path instead is to use as little technically involved ingredients in the proof as possible, and 
the proof of the the additivity of the $\hol{\alpha}\old$ quantities is considerably simpler than for the 
$\hol{\alpha}\nw$ quantities.
\end{rem}

The above Lemma yields almost immediately the coding theorem for compound channels:
\begin{thm}\label{thm:compound cap}
Let $W_i:\,\X\to\S(\hil),\,i\in \I$, be a set of channels, where $\I$ is an arbitrary index set.
Then
\begin{align}\label{finite capacity}
C\bz\{W_i^{\otimes n}:\,i\in\I\}_{n\in\bN}\jz
\ge \sup_{p\in\P_f(\X)}\inf_i\chi(W_i,p).
\end{align}
\end{thm}
\begin{proof}
We assume that $\sup_{p\in\P_f(\X)}\inf_i\chi(W_i,p)>0$, since otherwise the assertion is trivial.
Let $p\in\P_f(\X)$ be such that $\inf_i\chi(W_i,p)>0$, and for every $i\in\I$, let 
$W_{p,i}:\,\supp p\to\S(\hil)$ be the restriction of the channel $W_i$ to $\supp p$. 
Let $V$ be the vector space of functions from $\X$ to $\B(\hil)$, equipped with the norm
$\norm{V}:=\sup_{x\in\supp p}\norm{V(x)}_1$, and let 
$\Delta$ denote the real dimension of $V$.
Let $\kappa>0$, and for every $n\in\bN$, let $\I(n)$ be a finite index set such that $|\I(n)|\le (1+2e^{n\kappa})^{\Delta}$ and
$\delta_n:=\sup_{i\in\I}\inf_{j\in\I(n)}\norm{W_{p,i}-W_{p,j}}\le e^{-n\kappa}$. The existence of such index sets is guaranteed by Lemma \ref{Lemma:state approximation}. 

Let $R$ be such that $0<R<\inf_i\chi(W,p)$, and
for every $n\in\bN$, let $\C_n$ be a code as in Lemma \ref{thm:capacity}, with $\I(n)$ in place of $\I$,
and $\{W_{p,i}\}_{i\in\I(n)}$ in place of $\{W_i\}_{i\in\I}$.
Since the codewords of $\C_n$ are in $\supp p^{\otimes n}$, we have
\begin{align*}
p_e\bz\{W_{p,i}^{\otimes n}\}_{i\in\I(n)},\C_n\jz=
p_e\bz\{W_i^{\otimes n}\}_{i\in\I(n)},\C_n\jz,
\end{align*}
and it is easy to see that 
\begin{align*}
p_e\bz\{W_i^{\otimes n}\}_{i\in\I(n)},\C_n\jz
\ge
p_e\bz\{W_i^{\otimes n}\}_{i\in\I},\C_n\jz-n\delta_n.
\end{align*}
Hence, by Lemma \ref{thm:capacity} we have
\begin{align*}
&p_e\bz\{W_i^{\otimes n}\}_{i\in\I},\C_n\jz\\
&\ds\le
8|\I(n)|^2\exp\Big[ n(\alpha-1)\Big(\alpha\inf_{i\in\I}\chi_{\alpha}\old(W_i,p)-R\\
&\ds\ds\ds-(\alpha-1)\log\dim(\hil)\Big)\Big]
+ne^{-n\kappa},
\end{align*}
where we also used that $(\alpha-1)\min_{i\in\I(n)}\chi_{\alpha}\old(W_{p,i},p)
=(\alpha-1)\min_{i\in\I(n)}\chi_{\alpha}\old(W_{i},p)
\le(\alpha-1)\inf_{i\in\I}\chi_{\alpha}\old(W_{i},p)$.

By Lemma \ref{Lemma:inf Hol limit}, there exists an
$\alpha\in(0,1)$ such that 
$\nu:=\alpha\inf_{i\in\I}\chi_{\alpha}\old(W_i,p)-R-(\alpha-1)\log\dim(\hil)>0$. Choosing then
$\kappa$ such that $2\kappa \Delta/(1-\alpha)<\nu$, we see that
the error probability goes to zero exponentially fast, while the rate is at least $R$.

This shows that $C\bz\{W_i^{\otimes n}:\,i\in\I\}_{n\in\bN}\jz\ge \inf_i\chi(W_i,p)$, and taking the supremum 
over all $p\in\P_f(\X)$ yields the assertion.
\end{proof}

Theorem \ref{thm:compound cap} yields immediately the following lower bound on the capacity of finite averaged channels,
which is the achievability part of the coding theorem in \cite{DD}:
\begin{cor}\label{cor:finite capacity lower bounds}
Let $W_i:\,\X\to\S(\hil),\,i\in \I$, be a set of channels, where $\I$ is an arbitrary index set, and let 
$\gamma$ be a finitely supported probability distribution on $\I$.
Then
\begin{align*}
C\bz\left\{\sum\nolimits_i\gamma(i) W_i^{\otimes n}\right\}_{n\in\bN}\jz
&=
C\bz\{W_i^{\otimes n}:\,i\in\supp \gamma\}_{n\in\bN}\jz\\
&\ge \sup_{p\in\P_f(\X)}\min_{i\in\supp \gamma}\chi(W_i,p).
\end{align*}
\end{cor}

\section*{Acknowledgments}

The author is grateful to Professor Fumio Hiai and Nilanjana Datta for discussions, and to an anonymous
referee for helpful suggestions regarding the presentation of the paper.
This research was supported by a Marie Curie International Incoming Fellowship within 
the 7th European Community Framework Programme.
The author also acknowledges support by the European Research Council 
(Advanced Grant ``IRQUAT''), by the Spanish MINECO Project No. FIS2013-40627-P, and by the Generalitat de Catalunya CIRIT Project No. 2014 SGR 966.
Part of this work was done when the author was a Marie Curie 
research fellow at the School of Mathematics, University of Bristol.

\begin{IEEEbiographynophoto}
{Mil\'an Mosonyi} Received his PhD in Physics from the Catholic University of Leuven in 2005. He 
joined the Department of Analysis at the Budapest University of Technology and Economics as an assistant professor in 2005, and
he has been an associate professor there since 2012. Currently he is on a 2-year leave at the F\'{\i}sica Te\`{o}rica: Informaci\'{o} i Fenomens Qu\`{a}ntics,
Universitat Aut\`{o}noma de Barcelona, as a postdoctoral research fellow. His main research interests are quantum Shannon theory and quantum statistics.
\end{IEEEbiographynophoto}

\end{document}